\documentclass[preprint,review,12pt]{elsarticle_arxiv}



\usepackage{graphicx}

\usepackage{amsfonts,amstext,amsmath,amssymb,amsthm}

\usepackage{txfonts}

\usepackage{float}

\usepackage{subfig}

\usepackage{tabularx, booktabs}
\newcolumntype{Y}{>{\centering\arraybackslash}X}

\usepackage{empheq}
\usepackage{paralist}

\topmargin=-0.625in
\headsep=0.5in
\oddsidemargin=0in
\evensidemargin=0in
\textwidth=6.5in
\textheight=9in

\usepackage[%
    bookmarks=true,         
    unicode=false,          
    pdftoolbar=true,        
    pdfmenubar=true,        
    pdffitwindow=false,     
    pdfstartview={FitH},    
    pdftitle={Efficient Performance Evaluation of the Generalized Shiryaev--Roberts Detection Procedure in a Multi-Cyclic Setup},    
    pdfauthor={Aleksey S. Polunchenko and Grigory Sokolov and Wenyu Du},     
    pdfsubject={Efficient Performance Evaluation of the Generalized Shiryaev--Roberts Detection Procedure in a Multi-Cyclic Setup},   
    pdfcreator={Aleksey S. Polunchenko},   
    pdfproducer={MikTeX}, 
    pdfkeywords={Sequential Analysis}{Quickest change-point detection}{Shiryaev--Roberts procedure}{Generalized Shiryaev--Roberts procedure}{Sequential analysis}, 
    pdfnewwindow=true,      
    colorlinks=false,       
    linkcolor=red,          
    citecolor=green,        
    filecolor=magenta,      
    urlcolor=cyan           
]{hyperref}

\usepackage[T1]{fontenc}



\biboptions{numbers,compress}

\usepackage{multirow}

\renewcommand{\Pr}{\mathbb{P}} 
\DeclareMathOperator{\EV}{\mathbb{E}} 

\DeclareMathOperator{\LR}{\Lambda}

\DeclareMathOperator{\ARL}{ARL}

\DeclareMathOperator{\IADD}{IADD}
\DeclareMathOperator{\STADD}{STADD}
\DeclareMathOperator{\RIADD}{RIADD}
\DeclareMathOperator{\SSADD}{SSADD}

\newcommand{\T}{T}

\renewcommand{\le}{\leqslant} 
\renewcommand{\ge}{\geqslant}

\newcommand{\abs}[1]{\left\vert#1\right\vert}

\DeclareMathOperator{\One}{\mathchoice{\rm 1\mskip-4.2mu l}{\rm 1\mskip-4.2mu l}{\rm 1\mskip-4.6mu l}{\rm 1\mskip-5.2mu l}}
\newcommand{\indicator}[1]{{\One_{\left\{#1\right\}}}}

\theoremstyle{plain} 
\newtheorem{theorem}{Theorem}
\newtheorem{lemma}{Lemma}

\theoremstyle{remark}
\newtheorem*{remark}{Remark}

\newcommand{\ignore}[1]{}

\graphicspath{{./gfx/}}

\journal{Applied Stochastic Models in Business and Industry}


\begin{document}

\begin{frontmatter}



\title{\large{\bf\uppercase{Efficient Performance Evaluation of the Generalized Shiryaev--Roberts Detection Procedure in a Multi-Cyclic Setup}}}


\author[bu]{Aleksey\ S.\ Polunchenko\corref{cor-author}}
\ead{aleksey@binghamton.edu}
\ead[url]{http://www.math.binghamton.edu/aleksey}
\author[usc]{Grigory\ Sokolov}
\ead{gsokolov@usc.edu}
\ead[url]{http://cams.usc.edu/~gsokolov}
\author[bu]{Wenyu\ Du}
\ead{wdu@binghamton.edu}
\ead[url]{http://www.math.binghamton.edu/grads/wdu}
\cortext[cor-author]{Address correspondence to A.S.\ Polunchenko, Department of Mathematical Sciences, State University of New York (SUNY) at Binghamton, Binghamton, NY 13902--6000, USA; Tel: +1 (607) 777-6906; Fax: +1 (607) 777-2450; Email:~\href{mailto:aleksey@binghamton.edu}{aleksey@binghamton.edu}}
\address[bu]{Department of Mathematical Sciences, State University of New York (SUNY) at Binghamton\\Binghamtom, NY 13902--6000, USA}
\address[usc]{Department of Mathematics, University of Southern California\\Los Angeles, CA 90089--2532, USA}

\begin{abstract}

We propose a numerical method to evaluate the performance of the emerging Generalized Shiryaev--Roberts (GSR) change-point detection procedure in a ``minimax-ish'' multi-cyclic setup where the procedure of choice is applied repetitively (cyclically) and the change is assumed to take place at an unknown time moment in a distant-future stationary regime. Specifically, the proposed method is based on the integral-equations approach and uses the collocation technique with the basis functions chosen so as to exploit a certain change-of-measure identity and the GSR detection statistic's unique martingale property. As a result, the method's accuracy and robustness improve, as does its efficiency since using the change-of-measure ploy the Average Run Length (ARL) to false alarm and the Stationary Average Detection Delay (STADD) are computed {\em simultaneously}. We show that the method's rate of convergence is quadratic and supply a tight upperbound on its error. We conclude with a case study and confirm experimentally that the proposed method's accuracy and rate of convergence are robust with respect to three factors:\begin{inparaenum}[\itshape a)]\item partition fineness (coarse vs. fine), \item change magnitude (faint vs. contrast), and \item the level of the ARL to false alarm (low vs. high)\end{inparaenum}. Since the method is designed not restricted to a particular data distribution or to a specific value of the GSR detection statistic's headstart, this work may help gain greater insight into the characteristics of the GSR procedure and aid a practitioner to design the GSR procedure as needed while fully utilizing its potential.
\end{abstract}

\begin{keyword}


Sequential analysis\sep Sequential change-point detection\sep Shiryaev--Roberts procedure\sep Generalized Shiryaev--Roberts procedure
\end{keyword}

\end{frontmatter}

%
%
%
%

\section{Introduction}
\label{sec:intro}

Sequential (quickest) change-point detection is concerned with the design and analysis of statistical machinery for ``on-the-go'' detection of unanticipated changes that may occur in the characteristics of a ongoing (random) process. Specifically, the process is assumed to be continuously monitored through sequentially made observations (e.g., measurements), and should their behavior suggest the process may have statistically changed, the aim is to conclude so within the fewest observations possible, subject to a tolerable level of the false detection risk~\cite{Shiryaev:Book78,Basseville+Nikiforov:Book93,Poor+Hadjiliadis:Book08}. The subject's areas of application are diverse and virtually unlimited, and include industrial quality and process control~\cite{Shewhart:JASA1925,Shewhart:Book1931,Kenett+Zacks:Book1998,Ryan:Book2011,Montgomery:Book2012}, biostatistics, economics, seismology~\cite{Basseville+Nikiforov:Book93}, forensics, navigation~\cite{Basseville+Nikiforov:Book93}, cybersecurity~\cite{Tartakovsky+etal:JSM2005,Tartakovsky+etal:SM2006-discussion,Polunchenko+etal:SA2012,Tartakovsy+etal:IEEE-JSTSP2013}, communication systems~\cite{Basseville+Nikiforov:Book93}, and many more. A sequential change-point detection procedure---a rule whereby one stops and declares that (apparently) a change is in effect---is defined as a stopping time, $\T$, that is adapted to the observed data, $\{X_n\}_{n\ge1}$.

This work's focus is on the multi-cyclic change-point detection problem. It was first addressed in~\cite{Shiryaev:SMD61,Shiryaev:TPA63} in continuous time; see also, e.g.,~\cite{Shiryaev:Bachelier2002,Feinberg+Shiryaev:SD2006}. We consider the basic discrete-time case~\cite{Pollak+Tartakovsky:SS09,Shiryaev+Zryumov:Khabanov2010} which assumes the observations are independent throughout the entire period of surveillance with the pre- and post-change distributions fully specified (but not equal to one another). The change-point is treated as an unknown (but not random) nuisance parameter and is assumed to take place in a distant-future stationary regime. That is, the process of interest is not expected to change soon, and is monitored by applying the detection procedure of choice repetitively, or cyclically (hence, the name ``multi-cyclic''), starting anew each time a false alarm (appearance of a change) is sounded. This is known~\cite{Shiryaev:SMD61,Shiryaev:TPA63,Shiryaev:Bachelier2002,Feinberg+Shiryaev:SD2006,Pollak+Tartakovsky:SS09,Shiryaev+Zryumov:Khabanov2010} to be equivalent to the generalized Bayesian change-point detection problem (see, e.g.,~\cite{Tartakovsky+Moustakides:SA10,Polunchenko+Tartakovsky:MCAP2012,Polunchenko+etal:JSM2013} for an overview), and is a reasonable approach provided the cost of a false alarm is relatively small compared to the cost of a unit of delay to reach the conclusion that the process is ``out-of-control'' {\em post-change}. Such scenarios occur, e.g., in cybersecurity~\cite{Polunchenko+etal:SA2012,Tartakovsy+etal:IEEE-JSTSP2013} and in the economic design of quality control charts~\cite{Duncan:JASA1956,Montgomery:JQT1980,Lorenzen+Vance:T1986,Ho+Case:JQT1994}.

Within the multi-cyclic setup, particular emphasis in the paper is placed on two related detection procedures:\begin{inparaenum}[\itshape a)]\item the original Shiryaev--Roberts (SR) procedure (due to the independent work of Shiryaev~\cite{Shiryaev:SMD61,Shiryaev:TPA63} and that of Roberts~\cite{Roberts:T66}; see also~\cite{Girschick+Rubin:AMS52}), and \item its recent generalization---the Shiryaev--Roberts--$r$ (SR--$r$) procedure introduced in~\cite{Moustakides+etal:SS11} as a version of the original SR procedure with a headstart (the ``$r$'' in the name ``SR--$r$'' is the headstart), akin to~\cite{Lucas+Crosier:T1982}\end{inparaenum}. As the SR procedure is a special case of the SR--$r$ procedure (with no headstart, i.e., when $r=0$), we will collectively refer to both as the Generalized SR (GSR) procedure, in analogy to the terminology used in~\cite{Tartakovsky+etal:TPA2012}.

Our interest in the GSR procedure is due to three reasons. First, the GSR procedure is relatively ``young'' (the SR--$r$ procedure was proposed in~\cite{Moustakides+etal:SS11} in 2011), and has not yet been fully explored in the literature. Second, in spite of the ``young age'', the GSR procedure has already been proven to be {\em exactly} multi-cyclic optimal. This was first established in~\cite{Shiryaev:SMD61,Shiryaev:TPA63} in continuous time for the problem of detecting a shift in the drift of a Brownian motion; see also, e.g.,~\cite{Shiryaev:Bachelier2002,Feinberg+Shiryaev:SD2006}. An analogous result in discrete time was later obtained in~\cite{Pollak+Tartakovsky:SS09,Shiryaev+Zryumov:Khabanov2010}, and shortly after generalized in~\cite[Lemma~1]{Polunchenko+Tartakovsky:AS10}. Neither the famous Cumulative Sum (CUSUM) ``inspection scheme''~\cite{Page:B54} nor the popular Exponentially Weighted Moving Average (EWMA) chart~\cite{Roberts:T59} possesses such strong optimality property. This notwithstanding, there is currently a vacuum in the literature concerning numerical methodology to compute the performance of the GSR procedure in the multi-cyclic setup. As a matter of fact, to the best of our knowledge, only~\cite{Moustakides+etal:CommStat09,Tartakovsky+etal:IWSM2009,Moustakides+etal:SS11} and~\cite{Polunchenko+etal:ShLnkJAS2013} address this question, and in particular, offer a comparative performance analysis of the CUSUM scheme, EWMA chart and the GSR procedure in the multi-cyclic setup; similar analysis in continuous time can be found, e.g., in~\cite{Srivastava+Wu:AS1993}. However, the question of the employed method's accuracy is only partially answered, with no error bounds or convergence rates supplied. This is common in the literature on the computational aspect of change-point detection: to deal with the accuracy question in an {\it ad hoc} manner, if even. Some headway to fill in this gap was recently made in~\cite{Polunchenko+etal:MIPT2013,Polunchenko+etal:SA2014}. The third, equally important reason to consider the GSR procedure is its asymptotic near optimality in the minimax sense of Pollak~\cite{Pollak:AS85}; see~\cite{Tartakovsky+etal:TPA2012} for the corresponding result established using the GSR procedure's exact multi-cyclic optimality. Furthermore, the GSR procedure is also proven~\cite{Tartakovsky+Polunchenko:IWAP10,Polunchenko+Tartakovsky:AS10} to be {\em exactly} Pollak-minimax optimal in two special cases (again as a consequence of the exact multi-cyclic optimality). A practical implication of this is that the CUSUM chart is less minimax efficient than the GSR procedure, and the difference is especially contrast when the change is faint; for a few particular scenarios the difference is quantified, e.g., in~\cite{Moustakides+etal:CommStat09,Tartakovsky+etal:IWSM2009,Moustakides+etal:SS11}.

To foster and facilitate further research on the GSR procedure, in this work we build on to the work done previously in~\cite{Moustakides+etal:CommStat09,Tartakovsky+etal:IWSM2009,Moustakides+etal:SS11,Polunchenko+etal:MIPT2013,Polunchenko+etal:SA2014} and develop a more efficient numerical method to compute the performance of the GSR procedure in the multi-cyclic setup. Specifically, the proposed method is based on the integral-equations approach and uses the standard collocation framework (see, e.g.,~\cite[Section~12.1.1]{Atkinson+Han:Book09}) in combination with a certain change-of-measure identity and a certain martingale property specific to the GSR procedure's detection statistic. As a result, the proposed method's accuracy, robustness and efficiency improve noticeably; greater efficiency is because the method can {\em simultaneously} compute both the Average Run Length (ARL) to false alarm and the Stationary Average Detection Delay (STADD). We also show that the method's rate of convergence is quadratic, and supply a tight upperbound on the method's error; the method's expected characteristics are confirmed experimentally in a specific scenario. Since the method is designed not restricted to a particular data distribution or to a specific value of the GSR detection statistic's headstart, it may help gain greater insight into the properties of the GSR procedure and aid a practitioner to set up the GSR procedure as needed while fully utilizing its potential.

The paper is a response to the call made, e.g., in~\cite{Woodall+Montgomery:JQT1999}, and then reiterated, e.g., in~\cite{Stoumbos+etal:JASA2000}, for a ``greater synthesis'' of the areas of quickest change-point detection and statistical process and quality control. While much of the paper is written using change-point detection lingo and notation, it is our hope that this work will contribute to the called for ``cross-fertilization of ideas'' from aforesaid closely interrelated fields, and thus smoothen the transition of the state-of-the-art in quickest change-point detection into the state-of-the-practice in statistical process and quality control.

The remainder of the paper is structured thus: We first formally state the problem and introduce the GSR procedure in Section~\ref{sec:GSR+properties}. The numerical method and its accuracy analysis are presented in Section~\ref{sec:performance-evaluation}. Section~\ref{sec:case-study} is devoted to a case study aimed at assessing and comparing experimentally the accuracy, robustness and convergence rate of the proposed method against those of its predecessor method offered and applied in~\cite{Moustakides+etal:CommStat09,Tartakovsky+etal:IWSM2009,Moustakides+etal:SS11}. Finally, Section~\ref{sec:conclusion} draws conclusions.

\section{The problem and the Generalized Shiryaev--Roberts procedure}
\label{sec:GSR+properties}

This section is intended to formally state the problem and introduce the GSR procedure.

We begin with stating the problem. Let $f(x)$ and $g(x)$ be the observations' pre- and post-change distribution densities, respectively; $g(x)\not\equiv f(x)$. Define the change-point, $0\le\nu\le\infty$, as the unknown (but not random) serial index of the final pre-change observation (so it can potentially be infinite). That is, as illustrated in Figure~\ref{fig:basic-iid-change-point}, the probability density function (pdf) of $X_n$ is $f(x)$ for $1\le n\le\nu$, and $g(x)$ for $n\ge\nu+1$.
\begin{figure}[!htb]
    \centering
    \includegraphics[width=0.8\textwidth]{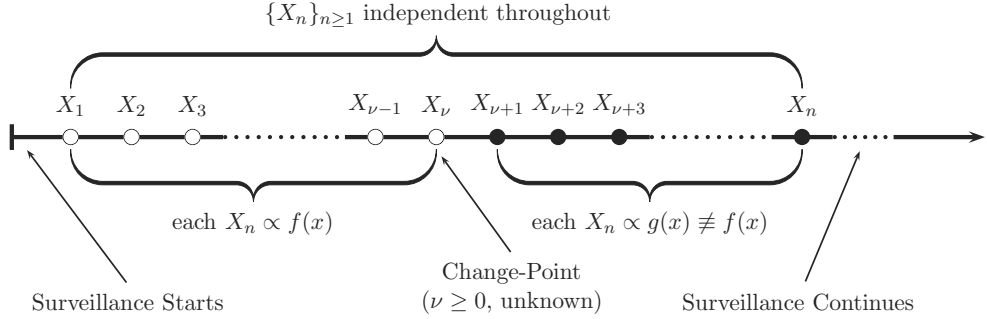}
    \caption{Basic ``minimax-ish'' setup of the quickest change-point detection problem.}
    \label{fig:basic-iid-change-point}
\end{figure}
The notation $\nu=0$ is to be understood as the case when the pdf of $X_n$ is $g(x)$ for all $n\ge1$, i.e., the data, $\{X_n\}_{n\ge1}$, are affected by change {\em ab initio}. Similarly, the notation $\nu=\infty$ is to mean that the pdf of $X_n$ is $f(x)$ for all $n\ge1$.

Let $\Pr_k$ ($\EV_k$) be the probability measure (corresponding expectation) given a known change-point $\nu=k$, where $0\le k\le\infty$. Particularly, $\Pr_\infty$ ($\EV_\infty$) is the probability measure (corresponding expectation) assuming the observations' distribution is always $f(x)$ and never changes (i.e., $\nu=\infty$). Likewise, $\Pr_0$ ($\EV_0$) is the probability measure (corresponding expectation) assuming the observations' distribution is $g(x)$ ``from the get-go'' (i.e., $\nu=0$).

From now on $\T$ will denote the stopping time associated with a generic detection procedure.

Given this ``minimax-ish'' context, the standard way to gauge the false alarm risk is through Lorden's~\cite{Lorden:AMS71} Average Run Length (ARL) to false alarm; it is defined as $\ARL(\T)\triangleq\EV_\infty[\T]$. To introduce the multi-cyclic change-point detection problem, let
\begin{align*}
\Delta(\gamma)
&\triangleq
\Bigl\{\T\colon\ARL(\T)\ge\gamma\Bigr\},\;\gamma>1,
\end{align*}
denote the class of procedures, $\T$, with the ARL to false alarm at least $\gamma>1$, a pre-selected tolerance level. Suppose now that it is of utmost importance to detect the change as quickly as possible, even at the expense of raising many false alarms (using a repeated application of the same procedure) before the change occurs. Put otherwise, in exchange for the assurance that the change will be detected with maximal speed, one agrees to go through a ``storm'' of false alarms along the way (the false alarms are ensued from repeatedly applying the same procedure, starting from scratch after each false alarm). This scenario is shown in Figure~\ref{fig:multi-cyclic-idea}.
\begin{figure*}[!t]
    \centering
    \subfloat[An example of the behavior of a process of interest with a change in mean at time $\nu$.]{
        \includegraphics[width=0.9\textwidth]{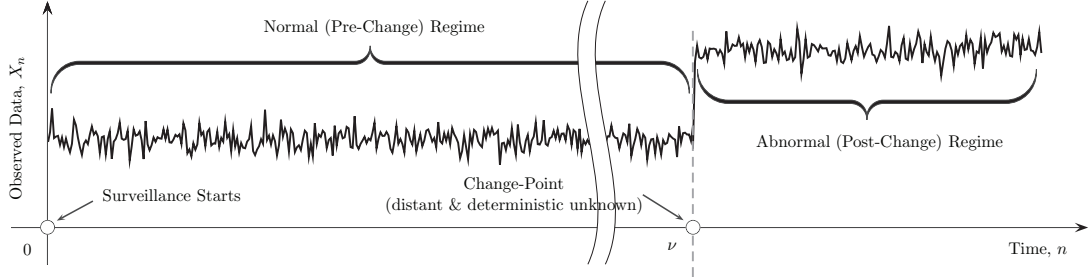}
    }\\ 
    \subfloat[Typical behavior of the detection statistic in the multi-cyclic mode.]{
        \includegraphics[width=0.9\textwidth]{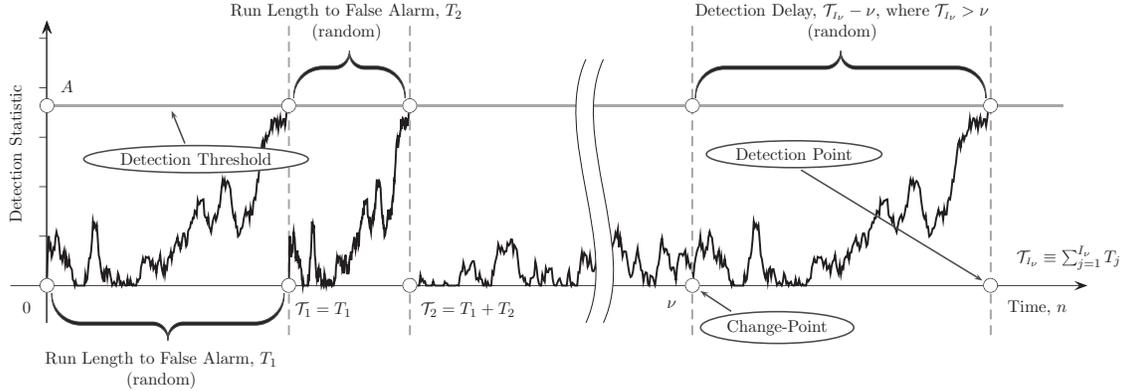}
    } 
    \caption{Multi-cyclic change-point detection in a stationary regime.}
    \label{fig:multi-cyclic-idea}
\end{figure*}
Formally, let $T_1,T_2,\ldots$ be sequential independent repetitions of the same stopping time, $\T$, and let ${\cal T}_j\triangleq T_1+T_2+\cdots+T_j$, $j\ge1$, be the time of the $j$-th alarm. Define $I_\nu\triangleq\min\{j\ge1\colon {\cal T}_j>\nu\}$ so that ${\cal T}_{\scriptscriptstyle I_\nu}$ is the time of detection of a true change that occurs at time moment $\nu$ after $I_\nu-1$ false alarms had been raised. One can then view the difference ${\cal T}_{\scriptscriptstyle I_\nu}-\nu(\ge0)$ as the detection delay. Let
\begin{align}\label{eq:STADD-def}
\STADD(\T)
&\triangleq
\lim_{\nu\to\infty}\EV_\nu[{\cal T}_{\scriptscriptstyle I_\nu}-\nu]
\end{align}
be the limiting value of the Average Detection Delay (ADD) referred to as the {\em Stationary ADD} (STADD). We hasten to note that the STADD and the Steady-State ADD (SSADD), or the Steady-State ARL, a detection delay measure popular in the areas of statistical process and quality control, are {\em not} the same thing; we will comment more on the difference in the end of this section. The multi-cyclic change-point detection problem is:
\begin{align}\label{eq:multi-cyclic-problem}
\text{to find $\T_{\mathrm{opt}}\in\Delta(\gamma)$ such that $\STADD(\T_{\mathrm{opt}})=\inf_{\T\in\Delta(\gamma)}\STADD(\T)$ for every $\gamma>1$.}
\end{align}

As can be seen from the description, the multi-cyclic formulation is instrumental in detecting a change that takes place in a distant future (i.e., $\nu$ is large), and is preceded by a stationary flow of false detections, each with a cost much smaller than that of missing the change by a single observation. By way of example, such scenarios are encountered, e.g., in cybersecurity~\cite{Polunchenko+etal:SA2012,Tartakovsy+etal:IEEE-JSTSP2013} and in the economic design of control charts~\cite{Duncan:JASA1956,Montgomery:JQT1980,Lorenzen+Vance:T1986,Ho+Case:JQT1994}.

Since the STADD is defined as a limit, the natural question is how does one evaluate it in practice? The answer is provided by the fact that the multi-cyclic formulation~\eqref{eq:multi-cyclic-problem} and the generalized Bayesian formulation of the change-point detection problem are completely equivalent to one another; see, e.g.,~\cite{Shiryaev:SMD61,Shiryaev:TPA63,Pollak+Tartakovsky:SS09}. A recent overview of all major formulations of the change-point detection problem can be found, e.g., in~\cite{Tartakovsky+Moustakides:SA10,Polunchenko+Tartakovsky:MCAP2012,Polunchenko+etal:JSM2013}. Specifically, the generalized Bayesian formulation is a limiting case of the Bayesian formulation with an (improper) uniform prior distribution imposed on the change-point, $\nu$. Under this assumption, the objective of the generalized Bayesian formulation is to find a procedure, $\T_{\mathrm{opt}}\in\Delta(\gamma)$, that minimizes the so-called Relative Integral ADD (RIADD) inside class $\Delta(\gamma)$ for every $\gamma>1$. Formally, the RIADD is defined as
\begin{align}\label{eq:RIADD-def}
\RIADD(\T)
&\triangleq
\IADD(\T)/\ARL(\T),
\end{align}
where
\begin{align}\label{eq:IADD-def}
\IADD(\T)
&\triangleq
\sum_{k=0}^\infty\EV_k[\max\{0,\T-k\}]
\end{align}
is the so-called Integral ADD (IADD)\footnote{The objective of the generalized Bayesian formulation is also often stated as ``to find $\T_{\mathrm{opt}}\in\Delta(\gamma)$ that minimizes the IADD inside class $\Delta(\gamma)$ for every $\gamma>1$. Due to the structure of the class $\Delta(\gamma)$ it is the same as attempting to minimize the RIADD inside that class.}. The equivalence of the multi-cyclic formulation and the generalized Bayesian formulation is in the statement that $\STADD(\T)\equiv\RIADD(\T)$ for any detection procedure, $\T$. For a proof see, e.g.,~\cite[Theorem~2]{Pollak+Tartakovsky:SS09} or~\cite{Shiryaev+Zryumov:Khabanov2010}; in continuous time the same result was obtained by Shiryaev, e.g., in~\cite{Shiryaev:SMD61,Shiryaev:TPA63,Shiryaev:Bachelier2002,Feinberg+Shiryaev:SD2006}. Hence, the STADD does not have to be computed as the limit~\eqref{eq:STADD-def}; instead, it can be evaluated as the RIADD through~\eqref{eq:RIADD-def}-\eqref{eq:IADD-def}. The specifics are discussed in Section~\ref{sec:performance-evaluation}. See also, e.g.,~\cite{Moustakides+etal:CommStat09,Tartakovsky+etal:IWSM2009,Moustakides+etal:SS11}.

It is shown in~\cite{Pollak+Tartakovsky:SS09,Shiryaev+Zryumov:Khabanov2010} that the multi-cyclic change-point detection problem~\eqref{eq:multi-cyclic-problem} is solved by the (original) Shiryaev--Roberts (SR) procedure~\cite{Shiryaev:SMD61,Shiryaev:TPA63,Roberts:T66}; incidentally, the comparative performance analysis offered in~\cite{Moustakides+etal:CommStat09,Tartakovsky+etal:IWSM2009} demonstrates that both the CUSUM scheme and the EWMA chart are outperformed (in the multi-cyclic sense) by the SR procedure. We now introduce the SR procedure. To that end, since the SR procedure is likelihood ratio-based, we first construct the corresponding likelihood ratio (LR).

Let $\mathcal{H}_k\colon\nu=k$ for $0\le k<\infty$ be the hypothesis that the change takes place at time moment $\nu=k$ for $0\le k<\infty$. Let $\mathcal{H}_{\infty}\colon\nu=\infty$ be the hypothesis that no change ever occurs (i.e., $\nu=\infty$). The joint distribution densities of the sample $\boldsymbol{X}_{1:n}\triangleq(X_1,\ldots,X_n)$, $n\ge1$, under each of these hypotheses are given by
\begin{align*}
p(\boldsymbol{X}_{1:n}|\mathcal{H}_{\infty})
&=
\prod_{j=1}^n f(X_j)\;\;\text{and}\;\;
p(\boldsymbol{X}_{1:n}|\mathcal{H}_k)
=
\prod_{j=1}^k f(X_j)\prod_{j=k+1}^n
g(X_j),\;\text{for $k<n$},
\end{align*}
with $p(\boldsymbol{X}_{1:n}|\mathcal{H}_{\infty})=p(\boldsymbol{X}_{1:n}|\mathcal{H}_k)$ for $k\ge n$. The corresponding LR therefore is
\begin{align*}
\LR_{1:n,\nu=k}
&\triangleq
\frac{p(\boldsymbol{X}_{1:n}|\mathcal{H}_k)}{p(\boldsymbol{X}_{1:n}|\mathcal{H}_{\infty})}
=\prod_{j=k+1}^n\LR_j,\;\text{for $k<n$},
\end{align*}
where from now on $\LR_n\triangleq g(X_n)/f(X_n)$ is the ``instantaneous'' LR for the $n$-th observation, $X_n$.

We now make an observation that will play an important role in the sequel. Let $P_d^{\LR}(t)\triangleq\Pr_d(\LR_1\le t)$, $t\ge0$, $d=\{0,\infty\}$, denote the cdf of the LR under measure $\Pr_d$, $d=\{0,\infty\}$, respectively. As the LR is the Radon--Nikod{\'y}m derivative of measure $\Pr_0$ with respect to measure $\Pr_\infty$, one can conclude that
\begin{align}\label{eq:change-of-measure-identity}
dP_0^{\LR}(t)
&=
t\,dP_\infty^{\LR}(t),\;\;t\ge0;
\end{align}
cf.~\cite{Polunchenko+etal:MIPT2013,Polunchenko+etal:SA2014}. It is assumed that measures $\Pr_0$ and $\Pr_\infty$ are mutually absolutely continuous. We will use this change-of-measure identity heavily in Section~\ref{sec:performance-evaluation} to improve the accuracy, rate of convergence, and efficiency of our numerical method.

Formally, the original SR procedure~\cite{Shiryaev:SMD61,Shiryaev:TPA63,Roberts:T66} is defined as the stopping time
\begin{align}\label{eq:T-SR-def}
\mathcal{S}_A
&\triangleq
\inf\big\{n\ge1\colon R_n\ge A\big\},
\end{align}
where $A>0$ is a detection threshold used to control the false alarm risk, and
\begin{align}\label{eq:Rn-SR-def}
R_n
&\triangleq
\sum_{k=1}^n\LR_{1:n,\nu=k}=\sum_{k=1}^n \prod_{i=k}^n\LR_i,\; n\ge1,
\end{align}
is the SR detection statistic; here and throughout the rest of the paper in every definition of a detection procedure we will assume that $\inf\{\varnothing\}=\infty$. Note the recursion
\begin{align}\label{SRstatrec}
R_{n+1}
&=
(1+R_{n})\LR_{n+1}\;\text{for}\; n=0,1,\ldots\;\text{with}\; R_0=0,
\end{align}
and we stress that $R_0=0$, i.e., the SR statistic starts from zero. Observe now that $\{R_n-n\}_{n\ge0}$ is a zero-mean $\Pr_\infty$-martingale, i.e., $\EV_\infty[R_n-n]=0$ for any $n\ge0$. From this and the Optional stopping theorem (see, e.g.,~\cite[Subsection~2.3.2]{Poor+Hadjiliadis:Book08} or~\cite[Chapter~VII]{Shiryaev:Book1995}), one can conclude that $\EV_\infty[R_{\mathcal{S}_A}-\mathcal{S}_A]=0$, whence $\ARL(\mathcal{S}_A)\triangleq\EV_\infty[\mathcal{S}_A]=\EV_\infty[R_{\mathcal{S}_A}]\ge A$. It is now easy for one to set the detection threshold, $A$, so as to ensure $\ARL(\mathcal{S}_A)\ge\gamma$ for any desired $\gamma>1$. More specifically, it can be shown~\cite{Pollak:AS87} that $\ARL(\mathcal{S}_A)=(A/\xi)[1+o(1)]$, as $\gamma\to\infty$, where $\xi\in(0,1)$ is the limiting exponential overshoot, a model-dependent constant that can be computed using nonlinear renewal theory~\cite{Siegmund:Book85,Woodroofe:Book82}. For practical purposes, the approximation $\ARL(\mathcal{S}_A)\approx A/\xi$ is known to be extremely accurate under broad conditions. More importantly, as shown in~\cite{Pollak+Tartakovsky:SS09,Shiryaev+Zryumov:Khabanov2010}, the SR procedure is {\em exactly} $\STADD(\T)$-optimal, i.e., formally: $\STADD(\mathcal{S}_{A_\gamma})=\inf_{\T\in\Delta(\gamma)}\STADD(\T)$ for every $\gamma>1$, where $A_\gamma>0$ is the solution of the equation $\ARL(\mathcal{S}_{A_\gamma})=\gamma$.

This strong optimality property of the SR procedure~\eqref{eq:T-SR-def}-\eqref{SRstatrec} was recently generalized in~\cite[Lemma~1]{Polunchenko+Tartakovsky:AS10} where the SR procedure was allowed to have a headstart. This version of the SR procedure is known as the Shiryaev--Roberts--$r$ (SR--$r$) procedure, and it was proposed in~\cite{Moustakides+etal:SS11}. Specifically, the SR--$r$ procedure regards starting off the original SR procedure~\eqref{eq:T-SR-def}-\eqref{SRstatrec} at a fixed (but specially designed) $R_0^r=r$, $r\ge0$, i.e., $r\ge0$ is a headstart. This is similar to the idea proposed earlier in~\cite{Lucas+Crosier:T1982} for the CUSUM scheme. However, it turns out that, unlike for the CUSUM scheme, giving the SR procedure a headstart is practically ``putting it on steroids'': the gain in performance far exceeds that observed in~\cite{Lucas+Crosier:T1982} for the CUSUM scheme.

Formally, the SR--$r$ procedure is defined as the stopping time
\begin{align}\label{SR-rst}
\mathcal{S}_{A}^r
&=
\inf\{n\ge1\colon R_n^r \ge A\},
\end{align}
where again $A>0$ and is used to control the ARL to false alarm, and
\begin{align}\label{SR-rstat}
R_{n+1}^r
&=
(1+R_{n}^r)\LR_{n+1}\;\text{for}\; n=0,1,\ldots\;\text{with}\; R_0^r=r\ge0,
\end{align}
and we remark that for $r=0$ the SR--$r$ procedure becomes the original SR procedure~\eqref{eq:T-SR-def}-\eqref{SRstatrec}. For this reason from now on we will collectively refer to both procedures as the Generalized SR (GSR) procedure, following the terminology used in~\cite{Tartakovsky+etal:TPA2012}. Observe that $\{R_n^r-n-r\}_{n\ge0}$ is a zero-mean $\Pr_\infty$-martingale, i.e., $\EV_\infty[R_n^r-n-r]=0$ for all $n\ge0$ and all $r$. As a result, one can generalize~\cite{Pollak:AS87} to conclude that
\begin{align}\label{eq:SRr-ARL-approx}
\ARL(\mathcal{S}_A^r)
&\approx
\dfrac{A}{\xi}-r\;\text{for sufficiently large $A>0$};
\end{align}
here $\xi\in(0,1)$ is again the limiting exponential overshoot. This approximation is also quite accurate under broad conditions. More importantly, it is shown in~\cite[Lemma~1]{Polunchenko+Tartakovsky:AS10} that the SR--$r$ procedure minimizes the generalized STADD
\begin{align}\label{eq:STADD-gen-def}
\STADD(\T)
&\triangleq
\left(r\EV_0[\T]+\IADD(\T)\right)\left/\left(\ARL(\T)+r\right)\right.
\end{align}
within class $\Delta(\gamma)$; here $\IADD(\T)$ is as in~\eqref{eq:IADD-def} above. From now on we will consider only the generalized STADD. It is direct to see that for $r=0$ the generalized STADD coincides with the RIADD given by~\eqref{eq:RIADD-def}. Formally, from~\cite[Lemma~1]{Polunchenko+Tartakovsky:AS10} we have that $\STADD(\mathcal{S}_A^r)=\inf_{\T\in\Delta(\gamma)}\STADD(\T)$ for every $\gamma>1$, where $A$ and $r$ are such that $\ARL(\mathcal{S}_A^r)=\gamma$ is true; for $r=0$ this reduces to the result established in~\cite{Pollak+Tartakovsky:SS09,Shiryaev+Zryumov:Khabanov2010} for the original SR procedure.

We conclude this section with a remark on the difference between the STADD given by~\eqref{eq:STADD-gen-def} and the Steady-State ADD (SSADD); the latter is often called the Steady-State ARL, and is a control chart performance metric popular in the area of quality control as metric less prone to the adverse ``inertia effect''~\cite{Yashchin:IBM-JRDMC1987,Yashchin:ISR1993,Woodall+Adams:HSMES1998}. Formally, the SSADD is defined as $\SSADD(\T)\triangleq\lim_{k\to\infty}\EV_k[\T-k|\T>k]$; see, e.g.,~\cite{Knoth:FSQC2006}. The principal difference between the STADD and the SSADD is that the SSADD is assuming the procedure of choice, $\T$, is applied only {\em once}, whereas $\STADD(\T)$ is assuming {\em repetitive} and independent application of $\T$. Hence, the steady-state regime involved in $\SSADD(\T)$ is different from the stationary regime involved in $\STADD(\T)$: the former is pertaining to the detection statistic, while the latter is pertaining to the change-point.

\section{Performance evaluation}
\label{sec:performance-evaluation}

We now develop a numerical method to evaluate the performance of the GSR procedure~\eqref{SR-rst}-\eqref{SR-rstat} in the multi-cyclic setup~\eqref{eq:multi-cyclic-problem}. Specifically, we ``gear'' the method toward numerical computation of two antagonistic performance measures associated with the GSR stopping time $\mathcal{S}_A^r$:\begin{inparaenum}[\itshape a)]\item the usual ``in-control'' ARL to false alarm, i.e., $\ARL(\mathcal{S}_A^r)\triangleq\EV_\infty[\mathcal{S}_A^r]$, and \item $\STADD(\mathcal{S}_A^r)$\end{inparaenum}, i.e., the Stationary Average Detection Delay~\eqref{eq:STADD-gen-def}. More concretely, we first derive an integral equation for each performance measure involved. Using the change-of-measure identity~\eqref{eq:change-of-measure-identity} we then show that the equation for $\ARL(\mathcal{S}_A^r)$ and that for the numerator of $\STADD(\mathcal{S}_A^r)$---see~\eqref{eq:STADD-gen-def}---differ only by the right-hand side (which is completely known for either equation). As a result, both equations can be solved {\em concurrently}. Finally, we present our numerical method to ({\em simultaneously}) solve the obtained equations, and offer an analysis of the method's accuracy and rate of convergence.

The proposed method is a build-up over one previously proposed in~\cite{Tartakovsky+etal:IWSM2009,Moustakides+etal:CommStat09,Moustakides+etal:SS11} and recently extended in~\cite{Polunchenko+etal:SA2014}; see also~\cite{Polunchenko+etal:MIPT2013}.

\subsection{Integral equations}
\label{ssec:integral-equations}

We begin with notation and assumptions. First recall $\LR_n\triangleq g(X_n)/f(X_n)$, i.e., the ``instantaneous'' LR for the $n$-th data point, $X_n$. For simplicity, $\LR_1$ will be assumed absolutely continuous, although at an additional effort the case of strictly non-arithmetic $\LR_1$ can be handled as well. Let ${P}_d^{\LR}(t)\triangleq\Pr_d(\LR_1\le t)$, $d=\{0,\infty\}$, $t\ge0$, be the cdf of the LR under the measure $\Pr_d$, $d=\{0,\infty\}$. Also, denote
\begin{align}\label{eq:K-def}
{K}_d(x,y)
&\triangleq
\frac{\partial}{\partial y}\Pr_d(R_{n+1}^{r}\le y|R_n^{r}=x)
=
\frac{\partial}{\partial y}{P}_d^{\LR}\left(\frac{y}{1+x}\right),
\; d=\{0,\infty\},
\end{align}
the {\em transition probability density kernel} for the (stationary) Markov process $\{R_n^r\}_{n\ge0}$.

We now note that from the change-of-measure identity $d{P}_0^{\LR}(t)=t\,d{P}_{\infty}^{\LR}(t)$, $t\ge0$, mentioned earlier, and definition~\eqref{eq:K-def} one can readily deduce that $(1+x)\,{K}_0(x,y)=y\,{K}_{\infty}(x,y)$; cf.~\cite{Polunchenko+etal:MIPT2013,Polunchenko+etal:SA2014}. This can be used, e.g., as a ``shortcut'' in deriving the formula for ${K}_0(x,y)$ from that for ${K}_{\infty}(x,y)$, or the other way around---whichever one of the two is found first. More importantly, as will be shown in Theorem~\ref{thm:lwr-bnd-int-eqn} below, using $(1+x)\,{K}_0(x,y)=y\,{K}_{\infty}(x,y)$, one can ``tie'' $\ARL(\mathcal{S}_A^{r=x})$ and $\STADD(\mathcal{S}_A^{r=x})$ to one another in such a way so that both can be computed {\em simultaneously}, with ${K}_0(x,y)$ completely eliminated. This result will then be used to design our numerical method in the next subsection. Last but not least, as done in~\cite{Polunchenko+etal:MIPT2013,Polunchenko+etal:SA2014}, we will also use this connection between ${K}_0(x,y)$ and ${K}_{\infty}(x,y)$ to improve the method's accuracy and rate of convergence; see Subsection~\ref{ssec:numerical-method+accuracy} below.

We now state the first equation of interest. Let $R_0^{r=x}=x\ge0$ be fixed. For notational brevity, from now on let $\ell(x,A)\triangleq\ARL(\mathcal{S}_A^{r=x})\triangleq\EV_{\infty}[\mathcal{S}_A^{r=x}]$; we reiterate that this expectation is conditional on $R_0^{r=x}=x$. Using the fact that $\{R_n^{r=x}\}_{n\ge0}$ is Markovian, it can be shown that $\ell(x,A)$ is governed by the equation
\begin{align}\label{eq:ARL-int-eqn}
\ell(x,A)
&=
1+\int_0^A{K}_{\infty}(x,y)\,\ell(y,A)\,dy;
\end{align}
cf.~\cite{Moustakides+etal:SS11}. 

Next, introduce $\delta_k(x,A)\triangleq\EV_k[(\mathcal{S}_A^{r=x}-k)^+]$, $k\ge0$. For $k=0$ observe that
\begin{align}\label{eq:ADD0-int-eqn}
\delta_0(x,A)
&=
1+\int_0^A{K}_{0}(x,y)\,\delta_0(y,A)\,dy,
\end{align}
which is an exact ``copy'' of equation~\eqref{eq:ARL-int-eqn} except that ${K}_\infty(x,y)$ is replaced with ${K}_0(x,y)$; cf.~\cite{Moustakides+etal:SS11}.

For $k\ge1$, since $\{R_n^{r=x}\}_{n\ge0}$ is Markovian, one can establish the recursion
\begin{align}\label{eq:ADDk-recursion}
\delta_{k+1}(x,A)
&=
\int_0^A{K}_{\infty}(x,y)\,\delta_k(y,A)\,dy,\; k\ge0,
\end{align}
with $\delta_0(x,A)$ first found from equation~\eqref{eq:ADD0-int-eqn}; cf.~\cite{Moustakides+etal:SS11}. Using this recursion one can generate the entire functional sequence $\{\delta_k(x,A)\}_{k\ge0}$ by repetitive application of the linear integral operator
\begin{align*}
\mathcal{K}_{\infty}\circ u
&\triangleq
[\mathcal{K}_{\infty}\circ u](x)
\triangleq
\int_0^A {K}_{\infty}(x,y)\,u(y)\,dy,
\end{align*}
where $u(x)$ is assumed to be sufficiently smooth inside the interval $[0,A]$. Temporarily deferring formal discussion of this operator's properties, note that using this operator notation, recursion~\eqref{eq:ADDk-recursion} can be rewritten as $\delta_{k+1}=\mathcal{K}_{\infty}\circ \delta_{k}$, $k\ge0$, or equivalently, as $\delta_{k}=\mathcal{K}_{\infty}^{k}\circ \delta_{0}$, $k\ge0$, where
\begin{align*}
\mathcal{K}_{\infty}^{k}\circ u
&\triangleq
\underbrace{\mathcal{K}_{\infty}\circ\cdots\circ\mathcal{K}_{\infty}}_{\text{$k$ times}}\circ\,u\;\text{for}\;k\ge1,
\end{align*}
and $\mathcal{K}_{\infty}^{0}$ is the identity operator from now on denoted as $\mathbb{I}$, i.e., $\mathcal{K}_{\infty}^{0}\circ u=\mathbb{I}\circ u\triangleq u$. Similarly, in the operator form, equation~\eqref{eq:ARL-int-eqn} can be rewritten as $\ell=1+\mathcal{K}_\infty\circ\ell$, and equation~\eqref{eq:ADD0-int-eqn} can be rewritten as $\delta_0=1+\mathcal{K}_0\circ\delta_0$.

We now note that the sequence $\{\delta_k(x,A)\}_{k\ge0}$ can be used to derive the equation for $\IADD(\mathcal{S}_A^{r=x})$ defined by~\eqref{eq:IADD-def}. Let $\psi(x,A)\triangleq\IADD(\mathcal{S}_A^{r=x})$, and observe that
\begin{align}\label{eq:IADD-Neumann-series}
\psi
&\triangleq
\sum_{k\ge0}\delta_k=\sum_{k\ge0}\mathcal{K}_\infty^k\circ\delta_0
=
\left(\,\sum_{k\ge0}\mathcal{K}_\infty^k\right)\circ\delta_0
=
(\,\mathbb{I}-\mathcal{K}_\infty)^{-1}\circ\delta_0,
\end{align}
whence
\begin{align}\label{eq:IADD-int-eqn}
\psi(x,A)
&=
\delta_0(x,A)+\int_0^A{K}_\infty(x,y)\,\psi(y,A)\,dy;
\end{align}
cf.~\cite{Moustakides+etal:SS11}. The implicit use of the geometric series convergence theorem in~\eqref{eq:IADD-Neumann-series} is justified by the fact that the spectral radius of the operator $\mathcal{K}_\infty$ is strictly less than 1; see, e.g.,~\cite{Moustakides+etal:SS11}.

At this point, with equations for $\ell(x,A)$, $\delta_0(x,A)$, and for $\psi(x,A)$ obtained, one can compute $\STADD(\mathcal{S}_A^{r=x})$ through
\begin{align}\label{eq:STADD-SADD-eqn}
\STADD(\mathcal{S}_A^{r=x})
&=
[x\,\delta_0(x,A)+\psi(x,A)]/[\ell(x,A)+x],
\end{align}
for any $x\ge0$; in particular, for $x=0$ this gives $\STADD(\mathcal{S}_A)$, i.e., the STADD for the original SR procedure~\eqref{eq:T-SR-def}-\eqref{eq:Rn-SR-def}. Thus, it may seem that the strategy to compute $\STADD(\mathcal{S}_A^{r=x})$ is to first compute $\delta_0(x,A)$ by solving equation~\eqref{eq:ADD0-int-eqn}, then use the obtained $\delta_0(x,A)$ to compute $\psi(x,A)$ by solving equation~\eqref{eq:IADD-int-eqn}, independently solve equation~\eqref{eq:ARL-int-eqn} to get $\ell(x,A)$, and finally plug all these into~\eqref{eq:STADD-SADD-eqn} to get $\STADD(\mathcal{S}_A^{r=x})$. Precisely this strategy was employed in~\cite{Tartakovsky+etal:IWSM2009,Moustakides+etal:CommStat09,Moustakides+etal:SS11}. However, we will now show that the computation of $\STADD(\mathcal{S}_A^{r=x})$ can be made much simpler. Let $\Xi(x,A)\triangleq x\,\delta_0(x,A)+\psi(x,A)$ so that $\STADD(\mathcal{S}_A^{r=x})=\Xi(x,A)/[\ell(x,A)+x]$.

\begin{theorem}\label{thm:lwr-bnd-int-eqn}
\begin{align}\label{eq:psi+delta-int-eqn}
\Xi(x,A)
&=
1+x+\int_0^A K_\infty(x,y)\,\Xi(y,A)\,dy.
\end{align}
\end{theorem}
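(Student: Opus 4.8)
The plan is to prove the identity purely by direct substitution, feeding the already-derived integral equations for $\psi(x,A)$ and $\delta_0(x,A)$ into the definition $\Xi(x,A)\triangleq x\,\delta_0(x,A)+\psi(x,A)$ and then invoking the change-of-measure relation $(1+x)\,K_0(x,y)=y\,K_\infty(x,y)$ to eliminate the kernel $K_0(x,y)$ entirely. No new analytic machinery should be needed; once the correct substitutions are chosen, everything collapses into a short algebraic rearrangement.

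First I would substitute the equation $\psi=\delta_0+\mathcal{K}_\infty\circ\psi$ (that is,~\eqref{eq:IADD-int-eqn}) into $\Xi$, which merges the standalone $x\,\delta_0$ with the $\delta_0$ contributed by the $\psi$-equation into a single factor $(1+x)$:
\begin{align*}
\Xi(x,A)
&=
(1+x)\,\delta_0(x,A)+\int_0^A K_\infty(x,y)\,\psi(y,A)\,dy.
\end{align*}
Next I would expand $(1+x)\,\delta_0(x,A)$ by means of $\delta_0=1+\mathcal{K}_0\circ\delta_0$ (that is,~\eqref{eq:ADD0-int-eqn}), which produces the constant term $(1+x)$ together with $(1+x)\int_0^A K_0(x,y)\,\delta_0(y,A)\,dy$.

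The key step---and the only place where any ingenuity enters---is to apply the change-of-measure identity to this last integral. Since $(1+x)$ does not depend on the integration variable $y$, it can be carried inside and absorbed via $(1+x)\,K_0(x,y)=y\,K_\infty(x,y)$, turning the term into $\int_0^A y\,K_\infty(x,y)\,\delta_0(y,A)\,dy$. This is precisely the maneuver that converts the $K_0$ kernel into $K_\infty$ while producing exactly the weight $y$ needed next. Collecting the two $K_\infty$-integrals then gives
\begin{align*}
\Xi(x,A)
&=
(1+x)+\int_0^A K_\infty(x,y)\,\bigl[\,y\,\delta_0(y,A)+\psi(y,A)\,\bigr]\,dy,
\end{align*}
and recognizing the bracketed integrand as $\Xi(y,A)$ itself immediately yields the claimed equation~\eqref{eq:psi+delta-int-eqn}.

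I anticipate the main obstacle to be conceptual rather than computational: seeing in advance that the stray factor $(1+x)$ multiplying $\delta_0$ is exactly what the change-of-measure identity consumes, and that the weight $y$ it leaves behind recombines with $\psi$ to reconstitute $\Xi$ under the integral sign. The manipulations themselves are routine, and their validity (finiteness of $\delta_0$ and $\psi$, well-definedness of the integrals, and convergence of the underlying Neumann series) is already guaranteed by the fact that the spectral radius of $\mathcal{K}_\infty$ is strictly less than $1$, as noted above; these properties are inherited from the existing equations rather than needing to be re-established here.
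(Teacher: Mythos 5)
Your proposal is correct and is essentially the paper's own proof: both arguments combine equations~\eqref{eq:ADD0-int-eqn} and~\eqref{eq:IADD-int-eqn} with the change-of-measure identity $(1+x)\,K_0(x,y)=y\,K_\infty(x,y)$ to eliminate $K_0$ and reassemble $y\,\delta_0(y,A)+\psi(y,A)=\Xi(y,A)$ under the integral sign. The only difference is presentational---you substitute into the definition of $\Xi$ directly, whereas the paper multiplies~\eqref{eq:ADD0-int-eqn} by $(1+x)$ and then adds the $\psi$-integral to both sides---so the two are the same algebraic rearrangement in a different order.
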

\begin{proof}
First, consider equation~\eqref{eq:ADD0-int-eqn} and multiply it through by $(1+x)$ to obtain
\begin{align*}
(1+x)\,\delta_0(x,A)
&=
1+x+\int_0^A(1+x)\,{K}_0(x,y)\,\delta_0(y,A)\,dy,
\end{align*}
which using the change-of-measure identity $(1+x)\,{K}_0(x,y)=y\,{K}_\infty(x,y)$ is equivalent to
\begin{align}\label{eq:delta-x-int-eqn}
(1+x)\,\delta_0(x,A)
&=
1+x+\int_0^A\,{K}_\infty(x,y)\,y\,\delta_0(y,A)\,dy.
\end{align}

Next, by adding
\begin{align*}
\int_0^A{K}_\infty(x,y)\,\psi(y,A)\,dy
\end{align*}
to both sides of~\eqref{eq:delta-x-int-eqn}, we obtain
\begin{align*}
(1+x)\,\delta_0(x,A)&+\int_0^A{K}_\infty(x,y)\,\psi(y,A)\,dy
=\\
&1+x+\int_0^A{K}_\infty(x,y)\,\psi(y,A)\,dy+\int_0^A\,{K}_\infty(x,y)\,y\,\delta_0(y,A)\,dy,
\end{align*}
which after some algebra becomes
\begin{align*}
\left(\delta_0(x,A)+\int_0^A{K}_\infty(x,y)\,\psi(y,A)\,dy\right)
&+x\,\delta_0(x,A)
=\\
&1+x+\int_0^A{K}_\infty(x,y)\,[\psi(y,A)+y\,\delta_0(y,A)]\,dy.
\end{align*}

Finally, note that the expression in parentheses in the left-hand side above is the right-hand side of equation~\eqref{eq:IADD-int-eqn}, i.e., it is equal to $\psi(x,A)$. Hence, recalling that $\Xi(x,A)\triangleq \psi(x,A)+x\,\delta_0(x,A)$, we arrive at the desired equation for $\Xi(x,A)$, i.e., at equation~\eqref{eq:psi+delta-int-eqn}.
\end{proof}

Using Theorem~\ref{thm:lwr-bnd-int-eqn}, i.e., equation~\eqref{eq:psi+delta-int-eqn}, one can compute $\ell(x,A)$ and $\Xi(x,A)$ {\em simultaneously} and without having to compute $\delta_0(x,A)$ and $\psi(x,A)$ at all. Specifically, since~\eqref{eq:ARL-int-eqn} and~\eqref{eq:psi+delta-int-eqn} can be rewritten, respectively, as $(\mathbb{I}-\mathcal{K}_\infty)\circ\ell=1$ and $(\mathbb{I}-\mathcal{K}_\infty)\circ\Xi=1+x$ in the operator form, one can see that both have the same integral operator in the left-hand side, and the right-hand side of either is completely known and does not require any preliminary evaluation. Thus, to evaluate the $\ARL(\mathcal{S}_A^{r=x})$ and $\STADD(\mathcal{S}_A^{r=x})$ one is effectively to solve two equations $(\mathbb{I}-\mathcal{K}_\infty)\circ u = 1$ and $(\mathbb{I}-\mathcal{K}_\infty)\circ u = x$, which can be done {\em simultaneously}. This is an improvement over the method proposed and used earlier in~\cite{Tartakovsky+etal:IWSM2009,Moustakides+etal:CommStat09,Moustakides+etal:SS11}. It is also an extension of the method proposed recently in~\cite{Polunchenko+etal:SA2014}; see also, e.g.,~\cite{Polunchenko+etal:MIPT2013}.

Combined, equations~\eqref{eq:ARL-int-eqn} and~\eqref{eq:psi+delta-int-eqn} form a ``complete package'' to compute any of the desired performance characteristics of the GSR procedure. The question to be considered next is that of computing these characteristics in practice.

\subsection{The numerical method and its accuracy analysis}
\label{ssec:numerical-method+accuracy}

We now turn to the question of solving the main equations---\eqref{eq:ARL-int-eqn} and~\eqref{eq:psi+delta-int-eqn}---presented in the preceding subsection. To this end, following~\cite{Tartakovsky+etal:IWSM2009,Moustakides+etal:CommStat09,Moustakides+etal:SS11,Polunchenko+etal:MIPT2013,Polunchenko+etal:SA2014}, observe first that both equations are renewal-type equations of the form
\begin{align}\label{eq:general-fredholm-eqn}
u(x)
&=
\upsilon(x)+\int_0^A{K}_\infty(x,y)\,u(y)\,dy,
\end{align}
where $\upsilon(x)$ is a given (known) function, $K_\infty(x,y)$ is as in~\eqref{eq:K-def}, and $u(x)$ is the unknown; note that while $u(x)$ does depend on the upper limit of integration, $A>0$, for notational simplicity, we will no longer emphasize that, and use the notation $u(x)$ instead of $u(x,A)$.

To see that equation~\eqref{eq:general-fredholm-eqn} is an ``umbrella'' equation for equations~\eqref{eq:ARL-int-eqn} and~\eqref{eq:psi+delta-int-eqn}, observe that, e.g., to obtain equation~\eqref{eq:ARL-int-eqn} on the ARL to false alarm, it suffices to set $\upsilon(x)\equiv 1$ for any $x\in\mathbb{R}$. Similarly, choosing $\upsilon(x)=1+x$ will yield equation~\eqref{eq:psi+delta-int-eqn}. Thus, any method to solve~\eqref{eq:general-fredholm-eqn} for a given $\upsilon(x)$ can be applied to solve~\eqref{eq:ARL-int-eqn} and~\eqref{eq:psi+delta-int-eqn} as well. The problem however, is that~\eqref{eq:general-fredholm-eqn} is a Fredholm integral equation of the second kind, and such equations seldom allow for an analytical solution. Hence, a numerical approach is in order and the aim of this subsection is to present one.

We first set the underlying space for the problem. Let $\mathcal{X}=\mathbb{C}[0,A]$ be the space of continuous functions over the interval $[0,A]$. Equip $\mathcal{X}$ with the usual uniform norm $\|u\|_\infty\triangleq\max_{x\in[0,A]}|u(x)|$. We will assume that ${P}_\infty^{\LR}(t)$ and the unknown function $u(x)$ are both continuous and well-behaved, i.e., both are differentiable as far as necessary. Under these assumptions $\mathcal{K}_\infty$ is a bounded linear operator from $\mathcal{X}$ into $\mathcal{X}$, equipped with the usual $\mathrm{L}_\infty$-norm:
\begin{align*}
\|\mathcal{K}_\infty\|_\infty
&\triangleq
\sup_{x\in[0,A]}\int_0^A\abs{{K}_\infty(x,y)}dy.
\end{align*}

It can be shown~\cite{Moustakides+etal:SS11} that $\|\mathcal{K}_\infty\|_\infty<1$. Thus, one can apply the Fredholm alternative~\cite[Theorem~2.8.10]{Atkinson+Han:Book09} to deduce that $(\mathbb{I}-\mathcal{K}_\infty)^{-1}$ is a bounded operator, and subsequently conclude that~\eqref{eq:general-fredholm-eqn} does have a solution and it is unique for any given $\upsilon(x)$.

To solve~\eqref{eq:general-fredholm-eqn} we propose to use the collocation method~\cite[Section~12.1.1]{Atkinson+Han:Book09}. The idea of this method is to first approximate the sought function, $u(x)$, as
\begin{align}\label{eq:u_N-def}
u_N(x)
&=
\sum_{j=1}^N u_{j,N}\,\phi_j(x),\;\; N\ge1,
\end{align}
where $\{u_{j,N}\}_{1\le j\le N}$ are constant coefficients to be determined, and $\{\phi_j(x)\}_{1\le j\le N}$ are suitably chosen (known) basis functions. For any such basis and any given $\{u_{j,N}\}_{1\le j\le N}$, substitution of $u_N(x)$ into the equation will yield a residual $r_N\triangleq u_N-\mathcal{K}_{\infty}\circ u_N-\upsilon$. Unless the true solution $u(x)$ itself is a linear combination of the basis functions $\{\phi_j(x)\}_{1\le j\le N}$, no choice of the coefficients $\{u_{j,N}\}_{1\le j\le N}$ will make the residual identically zero uniformly at all $x\in[0,A]$. However, by requiring $r_N(x)$ to be zero at some $\{z_j\}_{1\le j\le N}$, where $z_j\in[0,A]$ for all $j=1,2,\ldots,N$, one can achieve a certain level of proximity of the residual to zero. These points, $\{z_j\}_{1\le j\le N}$, are called the collocation nodes, and their choice is discussed below. As a result, we obtain the following system of $N$ algebraic equations on the coefficients $u_{j,N}$
\begin{align}\label{eq:fredholm-lin-system}
\boldsymbol{u}_N
&=
\boldsymbol{\upsilon}
+
\mathcal{\boldsymbol{K}}_\infty \boldsymbol{u}_N,
\end{align}
where $\boldsymbol{u}_N\triangleq[u_{1,N},\ldots,u_{N,N}]^\top$, $\boldsymbol{\upsilon}\triangleq[\upsilon(z_1),\ldots,\upsilon(z_N)]^\top$, and  $\boldsymbol{K}_\infty$ is a matrix of size $N$-by-$N$ whose $(i,j)$-th element is as follows:
\begin{align}\label{eq:def-matrix-K}
(\boldsymbol{K}_\infty)_{i,j}
&\triangleq
\int_0^A{K}_\infty(z_i,y)\,\phi_j(y)\,dy,\;\;1\le i,j\le N.
\end{align}

For the system of linear equations~\eqref{eq:fredholm-lin-system} to have one and only one solution, the functions $\{\phi_j(x)\}_{1\le j\le N}$ need to form a basis in the appropriate functional space, i.e., in particular, $\{\phi_j(x)\}_{1\le j\le N}$ need to be linearly independent; the necessary and sufficient condition that $\{\phi_j(x)\}_{1\le j\le N}$ are to satisfy is $\det[\phi_j(z_i)]\neq 0$. As $\{\phi_j(x)\}_{1\le j\le N}$ is a basis, expansion~\eqref{eq:u_N-def} is equivalent to acting on the sought function, $u(x)$, by an interpolatory projection operator, $\pi_N$, that projects $u(x)$ onto the span of $\{\phi_j(x)\}_{1\le j\le N}$. This operator is defined as $\pi_N\circ u\triangleq\sum_{j=1}^N u_{j,N} \phi_j(x)$ with $\|\pi_N\|_\infty \triangleq\max_{0\le x\le A}\sum_{j=1}^N|\phi_j(x)|\ge1$.

By design, the described method is most accurate at the collocation nodes, $\{z_j\}_{1\le j\le N}$, since it is at these points that the residual is zero. For an arbitrary point $x\not\in\{z_j\}_{1\le j\le N}$, the unknown function, $u(x)$, can be evaluated as
\begin{align}
\begin{aligned}\label{eq:int-eqn-iter-sol}
\widetilde{u}_N(x)
&=
\upsilon(x)+\int_0^A{K}_\infty(x,y)\,u_N(y)\,dy\\
&=
\upsilon(x)+\sum_{j=1}^Nu_{j,N}\int_0^A{K}_\infty(x,y)\,\phi_j(y)\,dy.
\end{aligned}
\end{align}
This technique is known as the iterated projection solution; see, e.g.,~\cite[Section~12.3]{Atkinson+Han:Book09}; note that $\widetilde{u}_N(z_j)=u_N(z_j)=u_{j,N}$, $1\le j\le N$.

We now consider the question of the method's accuracy and rate of convergence. To that end, it is apparent that the choice of $\{\phi_j(x)\}_{1\le j\le N}$ must play a critical role. This is, in fact, the case, as may be concluded from, e.g.,~\cite[Theorem~12.1.12,~p.~479]{Atkinson+Han:Book09}. Specifically, using $\|u-\widetilde{u}_N\|_\infty$ as a sensible measure of the method's error, and applying~\cite[Formula~12.3.21,~p.~499]{Atkinson+Han:Book09}, we obtain
\begin{align}\label{eq:err-bound-1}
\|u-\widetilde{u}_N\|_\infty
&\le
\|(\mathbb{I}-\mathcal{K}_\infty)^{-1}\|_\infty\|\mathcal{K}_\infty\circ(\mathbb{I}-\pi_N)\circ u\|_\infty\le
\|(\mathbb{I}-\mathcal{K}_\infty)^{-1}\|_\infty\|\mathcal{K}_\infty\|_\infty\|(\mathbb{I}-\pi_N)\circ u\|_\infty,
\end{align}
whence one can see that the method's error is determined by $\|(\mathbb{I}-\mathcal{K}_\infty)^{-1}\|_\infty$ and by $\|(\mathbb{I}-\pi_N)\circ u\|_\infty$; the latter is the interpolation error and can be found for each particular choice of $\pi_N$, which requires choosing the basis $\{\phi_j(x)\}_{1\le j\le N}$ and the collocation nodes $\{z_j\}_{1\le j\le N}$. The bigger problem, therefore, is to upperbound $\|(\mathbb{I}-\mathcal{K}_\infty)^{-1}\|_\infty$. To that end, the standard result
\begin{align*}
\|(\mathbb{I}-\mathcal{K}_\infty)^{-1}\|_\infty
&\le
\dfrac{1}{1-\|\mathcal{K}_\infty\|_\infty}
\end{align*}
is applicable, since $\|\mathcal{K}_\infty\|_\infty<1$. However, it is well-known that this is often a very crude inequality, and it may not be practical to use it in~\eqref{eq:err-bound-1} to upperbound $\|u-\widetilde{u}_N\|_\infty$. Since in our particular case $\mathcal{K}_\infty$ is the transition probability kernel of a stationary Markov process, a tighter (in fact, exact) upperbound on $\|(\mathbb{I}-\mathcal{K}_\infty)^{-1}\|_\infty$ is possible to obtain. We now state the corresponding result first established in~\cite[Lemma~3.1]{Polunchenko+etal:SA2014}; see also~\cite{Polunchenko+etal:MIPT2013}.
\begin{lemma}\label{lem:ARL-inv-bound}
$\|\,(\mathbb{I}-\mathcal{K}_\infty)^{-1}\|_\infty=\|\,\ell\,\|_\infty$.
\end{lemma}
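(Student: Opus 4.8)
The plan is to exploit the fact that $\mathcal{K}_\infty$ is a \emph{positive} operator: its kernel $K_\infty(x,y)$, being a transition probability density, is nonnegative, so $\mathcal{K}_\infty$ maps nonnegative functions to nonnegative functions and, more usefully, satisfies $|\mathcal{K}_\infty\circ u|\le\mathcal{K}_\infty\circ|u|$ pointwise for every $u\in\mathcal{X}$. Since $\|\mathcal{K}_\infty\|_\infty<1$, the resolvent admits the norm-convergent Neumann expansion $(\mathbb{I}-\mathcal{K}_\infty)^{-1}=\sum_{k\ge0}\mathcal{K}_\infty^k$, and each iterate $\mathcal{K}_\infty^k$ (for $k\ge1$) again has a nonnegative iterated kernel; hence $(\mathbb{I}-\mathcal{K}_\infty)^{-1}$ is itself positive and inherits the pointwise domination $|(\mathbb{I}-\mathcal{K}_\infty)^{-1}\circ u|\le(\mathbb{I}-\mathcal{K}_\infty)^{-1}\circ|u|$, with the $k=0$ term contributing $|u(x)|$ on each side.

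First I would dispatch the easy lower bound. Because $\ell$ solves $(\mathbb{I}-\mathcal{K}_\infty)\circ\ell=1$ --- this is exactly equation~\eqref{eq:ARL-int-eqn} in operator form --- we have $\ell=(\mathbb{I}-\mathcal{K}_\infty)^{-1}\circ 1$, where $1$ denotes the constant function of unit value, which satisfies $\|1\|_\infty=1$. Consequently
\begin{align*}
\|(\mathbb{I}-\mathcal{K}_\infty)^{-1}\|_\infty
&\ge
\frac{\|(\mathbb{I}-\mathcal{K}_\infty)^{-1}\circ 1\|_\infty}{\|1\|_\infty}
=
\|\ell\|_\infty.
\end{align*}

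For the reverse inequality I would fix an arbitrary $u\in\mathcal{X}$ with $\|u\|_\infty\le1$, so that $|u|\le 1$ pointwise, and then chain the positivity facts above with the monotonicity of $(\mathbb{I}-\mathcal{K}_\infty)^{-1}$:
\begin{align*}
\bigl|(\mathbb{I}-\mathcal{K}_\infty)^{-1}\circ u\bigr|
&\le
(\mathbb{I}-\mathcal{K}_\infty)^{-1}\circ|u|
\le
(\mathbb{I}-\mathcal{K}_\infty)^{-1}\circ 1
=
\ell
\quad\text{pointwise on }[0,A].
\end{align*}
Taking the supremum over $x\in[0,A]$ gives $\|(\mathbb{I}-\mathcal{K}_\infty)^{-1}\circ u\|_\infty\le\|\ell\|_\infty$, and since $u$ was arbitrary in the unit ball, $\|(\mathbb{I}-\mathcal{K}_\infty)^{-1}\|_\infty\le\|\ell\|_\infty$. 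Combining the two bounds yields the claimed identity. Note that $\ell\ge0$, being an $\ARL$, so $\|\ell\|_\infty=\max_{x\in[0,A]}\ell(x,A)$ and no absolute values are lost in the final step.

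The main obstacle --- and the only place requiring genuine care --- is the justification that $(\mathbb{I}-\mathcal{K}_\infty)^{-1}$ is positive and obeys the pointwise domination by $(\mathbb{I}-\mathcal{K}_\infty)^{-1}\circ|u|$. This rests on two ingredients: first, the norm convergence of the Neumann series, which legitimizes expanding $(\mathbb{I}-\mathcal{K}_\infty)^{-1}$ termwise and interchanging the resulting series with the triangle-inequality (absolute-value) bound; and second, the nonnegativity of every iterated kernel of $\mathcal{K}_\infty$, which propagates the single-step inequality $|\mathcal{K}_\infty\circ u|\le\mathcal{K}_\infty\circ|u|$ to all powers $\mathcal{K}_\infty^k$. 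Everything else --- the lower bound and the final supremum --- is routine.
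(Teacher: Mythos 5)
Your proof is correct. The paper itself does not prove Lemma~\ref{lem:ARL-inv-bound} --- it only cites the result from prior work --- but your argument (the lower bound from $\ell=(\mathbb{I}-\mathcal{K}_\infty)^{-1}\circ 1$ via equation~\eqref{eq:ARL-int-eqn}, and the upper bound from the positivity and monotonicity of the Neumann series, i.e., that the norm of a positive operator on $\mathbb{C}[0,A]$ is attained at the constant function $1$) is exactly the standard route by which the cited result is established.
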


With this lemma one can upperbound $\|u-\widetilde{u}_N\|_\infty$ rather tightly. Specifically, from~\eqref{eq:err-bound-1}, $\|\mathcal{K}_\infty\|_\infty<1$, and Lemma~\ref{lem:ARL-inv-bound}, we obtain $\|u-\widetilde{u}_N\|_\infty<\|\,\ell\,\|_\infty\|(\mathbb{I}-\pi_N)\circ u\|_\infty$, where the inequality is strict because $\|\mathcal{K}_\infty\|_\infty$ is strictly less than $1$. The only question now is the interpolation error $\|(\mathbb{I}-\pi_N)\circ u\|_\infty$, which is determined by the choice of $\pi_N$. To that end, for reasons to be explained below, we propose to seek the solution, $u(x)$, within the piecewise linear polynomial space. Specifically, given a positive integer $N\ge2$, let $\Pi_N\colon 0\triangleq x_0<x_1<\ldots<x_{N-1}\triangleq A$ denote a partition of the interval $[0,A]$, and for $j=1,\ldots,N-1$ set $I_j^N\triangleq(x_{j-1}, x_j)$, $h_j\triangleq x_j-x_{j-1}(>0)$, and $h\triangleq h(N)=\max_{1\le j\le N-1} h_j$; assume also that $h\to0$, as $N\to\infty$. Next, set $z_j=x_{j-1}$, $1\le j\le N$ and choose the basis $\{\phi_j(x)\}_{1\le j\le N}$ of the ``hat'' functions
\begin{empheq}[%
    left={%
        \phi_j(x)=%
    \empheqlbrace}]{align}\nonumber
&\frac{x-x_{j - 2}}{h_{j-1}},\quad\text{if $x\in I_{j-1}^N, j>1$;}\label{eq:lin-basis}\\
&\frac{x_j-x}{h_{j}},\quad\text{if $x\in I_{j}^N, j<N$;}\\\nonumber
&0,\quad\text{otherwise},
\end{empheq}
where $1\le j\le N$; cf.~\cite{Polunchenko+etal:MIPT2013,Polunchenko+etal:SA2014}.

For this choice of the functional basis $\{\phi_j(x)\}_{1\le j\le N}$ it is known~\cite[Formula~3.2.9,~p.~124]{Atkinson+Han:Book09} that $\|(\mathbb{I}-\pi_N)\circ u\|_\infty\le\|\,u_{xx}\,\|_\infty h^2/8$, where $u_{xx}\triangleq \partial^2 u(x)/\partial x^2$. Hence, the method's rate of convergence is quadratic and $\|u-\tilde{u}_N\|<\|\,\ell\,\|_\infty\|\,u_{xx}\,\|_\infty h^2/8$. This result can now be ``tailored'' to the equations of interest, namely, to equations~\eqref{eq:ARL-int-eqn} and~\eqref{eq:psi+delta-int-eqn}.
\begin{theorem}\label{thm:error-bound-ARL}
Given $N\ge2$ sufficiently large
\begin{align*}
\|\ell-\widetilde{\ell}_N\|_\infty
&<
\|\,\ell\,\|_\infty\|\,\ell_{xx}\,\|_\infty\frac{h^2}{8},\;\text{where}\; \ell_{xx}\triangleq\dfrac{\partial^2}{\partial x^2}\ell(x,A);
\end{align*}
note that the inequality is strict.
\end{theorem}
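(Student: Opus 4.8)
The plan is to specialize the general error bound already derived for the ``umbrella'' equation~\eqref{eq:general-fredholm-eqn} to the particular case of the ARL equation~\eqref{eq:ARL-int-eqn}. Recall that just above the theorem statement we established, for a general right-hand side $\upsilon(x)$ and its corresponding solution $u(x)$, the estimate
\begin{align*}
\|u-\widetilde{u}_N\|_\infty
&<
\|\,\ell\,\|_\infty\|\,u_{xx}\,\|_\infty\,\frac{h^2}{8},
\end{align*}
where the strictness comes from $\|\mathcal{K}_\infty\|_\infty<1$, the factor $\|\,\ell\,\|_\infty$ comes from Lemma~\ref{lem:ARL-inv-bound} (which gives the exact value of $\|(\mathbb{I}-\mathcal{K}_\infty)^{-1}\|_\infty$), and the factor $\|\,u_{xx}\,\|_\infty h^2/8$ is the piecewise-linear interpolation error from~\cite[Formula~3.2.9]{Atkinson+Han:Book09}. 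The whole proof consists of instantiating this chain of inequalities with the specific data of the ARL problem.

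First I would observe that equation~\eqref{eq:ARL-int-eqn} is exactly~\eqref{eq:general-fredholm-eqn} with the constant forcing term $\upsilon(x)\equiv1$; this was already noted in the text just before the statement of the general bound. Consequently its unique solution is $u(x)=\ell(x,A)$, and its iterated projection approximation~\eqref{eq:int-eqn-iter-sol} is precisely $\widetilde{\ell}_N(x)$. Substituting $u=\ell$ into the general estimate immediately turns $\|u-\widetilde{u}_N\|_\infty$ into $\|\ell-\widetilde{\ell}_N\|_\infty$ and turns the second-derivative factor $\|\,u_{xx}\,\|_\infty$ into $\|\,\ell_{xx}\,\|_\infty$, yielding exactly the claimed inequality
\begin{align*}
\|\ell-\widetilde{\ell}_N\|_\infty
&<
\|\,\ell\,\|_\infty\|\,\ell_{xx}\,\|_\infty\,\frac{h^2}{8}.
\end{align*}
The strictness is inherited verbatim, since nothing about the argument for strictness depended on the form of $\upsilon$.

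The main (and essentially the only) thing one must verify, rather than merely cite, is that the hypotheses underlying the general bound are genuinely satisfied when $u=\ell$. Two facts are needed: that $\ell(x,A)$ lies in $\mathcal{X}=\mathbb{C}[0,A]$ and is twice differentiable so that $\|\,\ell_{xx}\,\|_\infty$ is finite, and that the Fredholm alternative applies (guaranteed already by $\|\mathcal{K}_\infty\|_\infty<1$, which holds uniformly). Smoothness of $\ell$ follows from the assumed smoothness of $P_\infty^{\LR}$ and hence of the kernel $K_\infty(x,y)$ in~\eqref{eq:K-def}, together with the fact that $\ell$ solves a Fredholm equation of the second kind whose inhomogeneous term is the constant $1$; bootstrapping regularity of the kernel through the fixed-point relation $\ell=1+\mathcal{K}_\infty\circ\ell$ shows $\ell$ is as smooth as the data permit, which is why the qualifier ``$N\ge2$ sufficiently large'' appears (so that $h=h(N)$ is small enough for the asymptotic interpolation estimate to hold). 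I expect the only subtlety, and thus the step warranting a sentence of justification, to be this appeal to the well-behavedness of $\ell$ and its second derivative; everything else is a direct restriction of an already-proved general inequality to the special forcing term $\upsilon\equiv1$.
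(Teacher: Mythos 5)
Your proposal is correct and follows essentially the same route as the paper: the paper obtains Theorem~\ref{thm:error-bound-ARL} precisely by specializing the general chain $\|u-\widetilde{u}_N\|_\infty\le\|(\mathbb{I}-\mathcal{K}_\infty)^{-1}\|_\infty\|\mathcal{K}_\infty\|_\infty\|(\mathbb{I}-\pi_N)\circ u\|_\infty$, together with Lemma~\ref{lem:ARL-inv-bound}, the strict bound $\|\mathcal{K}_\infty\|_\infty<1$, and the piecewise-linear interpolation estimate, to the case $\upsilon\equiv1$, $u=\ell$. Your added remark on verifying the smoothness of $\ell$ is a point the paper simply disposes of by its standing ``well-behavedness'' assumption, so it is a harmless (indeed slightly more careful) elaboration rather than a deviation.
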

\begin{theorem}\label{thm:error-bound-lwrbnd-numerator}
Given $N\ge2$ sufficiently large
\begin{align*}
\|\Xi-\widetilde{\Xi}_N\|_\infty
&<
\|\,\ell\,\|_\infty\|\,\Xi_{xx}\,\|_\infty\frac{h^2}{8},\;\text{where}\; \Xi_{xx}\triangleq\dfrac{\partial^2}{\partial x^2}\Xi(x,A);
\end{align*}
note that the inequality is strict.
\end{theorem}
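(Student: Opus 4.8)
The plan is to recognize that $\Xi$ is itself the solution of a Fredholm equation of exactly the same form \eqref{eq:general-fredholm-eqn} as $\ell$, so that the generic error analysis already assembled in this subsection applies verbatim and the argument simply parallels the proof of Theorem~\ref{thm:error-bound-ARL}. Indeed, Theorem~\ref{thm:lwr-bnd-int-eqn} shows that $\Xi$ obeys \eqref{eq:psi+delta-int-eqn}, which is precisely \eqref{eq:general-fredholm-eqn} with the completely known right-hand side $\upsilon(x)=1+x$. I would therefore identify $\Xi$ with the generic unknown $u$ and invoke the bounds already derived for \eqref{eq:general-fredholm-eqn}, the only difference from Theorem~\ref{thm:error-bound-ARL} being the choice of $\upsilon$ (and hence the appearance of $\Xi_{xx}$ rather than $\ell_{xx}$).

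Concretely, I would begin from the general estimate \eqref{eq:err-bound-1} specialized to $u=\Xi$,
\[
\|\Xi-\widetilde{\Xi}_N\|_\infty\le\|(\mathbb{I}-\mathcal{K}_\infty)^{-1}\|_\infty\,\|\mathcal{K}_\infty\|_\infty\,\|(\mathbb{I}-\pi_N)\circ\Xi\|_\infty,
\]
and then replace the three factors in turn. By Lemma~\ref{lem:ARL-inv-bound} the first factor equals $\|\ell\|_\infty$ (note that this prefactor is a property of the operator, not of the particular right-hand side, which is exactly why $\|\ell\|_\infty$ reappears here even though we are solving for $\Xi$). The second factor satisfies $\|\mathcal{K}_\infty\|_\infty<1$, which both discharges it from the bound and renders the resulting inequality strict, exactly as in the generic statement preceding \eqref{eq:lin-basis}. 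For the piecewise-linear hat-function basis, the interpolation error obeys $\|(\mathbb{I}-\pi_N)\circ\Xi\|_\infty\le\|\Xi_{xx}\|_\infty\,h^2/8$. Chaining the strict middle step with the non-strict interpolation step yields $\|\Xi-\widetilde{\Xi}_N\|_\infty<\|\ell\|_\infty\,\|\Xi_{xx}\|_\infty\,h^2/8$, which is the claimed bound.

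The only step demanding genuine care---and hence the main obstacle---is justifying that $\Xi$ is twice continuously differentiable with $\|\Xi_{xx}\|_\infty<\infty$, since the hat-function interpolation estimate presupposes this regularity. I would establish it from the standing smoothness assumptions on $P_\infty^{\LR}(t)$, and hence on the kernel $K_\infty(x,y)$ of \eqref{eq:K-def}: because $\Xi$ solves \eqref{eq:psi+delta-int-eqn} with the smooth right-hand side $1+x$, differentiating twice under the integral sign (legitimate by a Leibniz/dominated-convergence argument available from the kernel's smoothness together with $\|\mathcal{K}_\infty\|_\infty<1$) transfers the regularity of $K_\infty$ onto $\Xi$; equivalently, since $\Xi=\psi+x\,\delta_0$ and both $\psi$ and $\delta_0$ satisfy equations of the form \eqref{eq:general-fredholm-eqn}, each inherits the kernel's smoothness and so does their combination. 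The qualifier ``$N\ge2$ sufficiently large'' enters only to ensure that $h=h(N)$ is small enough for the interpolation bound to be in force (and that $h\to0$ as $N\to\infty$), and not for any additional structural reason.
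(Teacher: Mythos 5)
Your proposal is correct and follows essentially the same route as the paper: the paper obtains Theorem~\ref{thm:error-bound-lwrbnd-numerator} by specializing the generic bound~\eqref{eq:err-bound-1} to $u=\Xi$ (legitimate since, by Theorem~\ref{thm:lwr-bnd-int-eqn}, $\Xi$ solves~\eqref{eq:general-fredholm-eqn} with $\upsilon(x)=1+x$), replacing $\|(\mathbb{I}-\mathcal{K}_\infty)^{-1}\|_\infty$ by $\|\,\ell\,\|_\infty$ via Lemma~\ref{lem:ARL-inv-bound}, using $\|\mathcal{K}_\infty\|_\infty<1$ for strictness, and invoking the piecewise-linear interpolation estimate $\|(\mathbb{I}-\pi_N)\circ\Xi\|_\infty\le\|\,\Xi_{xx}\,\|_\infty h^2/8$. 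Your additional care about the twice-differentiability of $\Xi$ is handled in the paper only by the standing assumption that the solutions are ``differentiable as far as necessary,'' so your remarks there go slightly beyond, but do not diverge from, the paper's argument.
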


The error bound given in Theorem~\ref{thm:error-bound-ARL} was first obtained in~\cite[Theorem~3.1]{Polunchenko+etal:SA2014}. Note that the bound is proportional to the magnitude of the solution, i.e., to $\ell(x,A)\triangleq\ARL(\mathcal{S}_A^{r=x})$, which can be large. Worse yet, the bound is also proportional to the detection threshold squared lurking in numerator of $h^2$ (for simplicity assume that $h=A/N$). Since $\ell(x,A)\approx A/\xi-x$ with $\xi\in(0,1)$, one can roughly set $A\approx\ell(x,A)$ and conclude that the error bound is roughly proportional to $\ell^3(x,A)$, i.e., to the magnitude of the solution cubed. This may seem to drastically offset the second power of $N$ buried in the denominator of $h^2$. However, as was already argued and confirmed experimentally in~\cite{Polunchenko+etal:MIPT2013,Polunchenko+etal:SA2014}, this does not happen. The reason is the (almost) linearity of $\ell(x,A)\triangleq\ARL(\mathcal{S}_A^{r=x})$ with respect to the headstart $r=x$, as evident from the approximation~\eqref{eq:SRr-ARL-approx}. Specifically, $\ell(x,A)\approx A/\xi-x$, and therefore
\begin{align*}
\frac{\partial^2}{\partial x^2}
\ell(x,A)
&\approx 0,\;\text{at least for}\; x\in[0,A].
\end{align*}
This makes the error bound given in Theorem~\ref{thm:error-bound-ARL} extremely close to zero, even for relatively small $N$. Consequently, $\ell(x,A)$ can be computed rather accurately without requiring $N$ to be large. This is one of the reasons to use the above piecewise linear basis~\eqref{eq:lin-basis}.

The error bound given in Theorem~\ref{thm:error-bound-lwrbnd-numerator} is not as close to zero because, unlike $\ell(x,A)$, the the function $\Xi(x,A)$ is not linear in $x$. Nevertheless, as will be shown experimentally in the next section, the method's accuracy and robustness for $\Xi(x,A)$ are substantially better than those of the method proposed in~\cite{Tartakovsky+etal:IWSM2009,Moustakides+etal:CommStat09,Moustakides+etal:SS11}.

There is one more purpose that the change-of-measure identity, $(1+x)\,{K}_0(x,y)=y\,{K}_\infty(x,y)$, serves: it is used to compute the matrix~\eqref{eq:def-matrix-K} required to implement the proposed numerical method. Specifically, due to the change-of-measure identity, the integrals involved in~\eqref{eq:def-matrix-K} can be computed {\it exactly}: using~\eqref{eq:def-matrix-K} and~\eqref{eq:lin-basis}, and recalling that $z_j=x_{j-1}$, $1\le j\le N$, the corresponding formula is
\begin{align}
\begin{aligned}\label{eq:matrix-K-new-method}
(\boldsymbol{K}_\infty)_{i,j}
&=
\int_0^A{K}_\infty(x_i,y)\,\phi_j(y)\,dy\\
&=
\dfrac{1}{h_{j-1}}\left\{(1+x_i)\left[{P}_0^{\LR}\left(\dfrac{x_{j-1}}{1+x_i}\right)-{P}_0^{\LR}\left(\dfrac{x_{j-2}}{1+x_i}\right)\right]\right.-\\
&\qquad\qquad\qquad\qquad\qquad x_{j-2}\left.\left[{P}_\infty^{\LR}\left(\dfrac{x_{j-1}}{1+x_i}\right)-{P}_\infty^{\LR}\left(\dfrac{x_{j-2}}{1+x_i}\right)\right]\right\}\indicator{j>1}+\\
&\qquad\dfrac{1}{h_{j}}\left\{x_j\left[{P}_\infty^{\LR}\left(\dfrac{x_{j}}{1+x_i}\right)-{P}_\infty^{\LR}\left(\dfrac{x_{j-1}}{1+x_i}\right)\right]\right.-\\
&\qquad\qquad\qquad\qquad\qquad (1+x_i)\left.\left[{P}_0^{\LR}\left(\dfrac{x_{j}}{1+x_i}\right)-{P}_0^{\LR}\left(\dfrac{x_{j-1}}{1+x_i}\right)\right]\right\}\indicator{j<N}
\end{aligned}
\end{align}
for $1\le i,j\le N$; cf.~\cite{Polunchenko+etal:MIPT2013,Polunchenko+etal:SA2014}.

To wrap this subsection, note that the proposed method is a numerical framework that can also be used to assess the accuracy of the popular Markov chain approach, introduced in~\cite{Brook+Evans:B1972}, and later extended, e.g., in~\cite{Woodall:T1983}. To this end, as noted in~\cite{Champ+Rigdon:CommStat1991}, the Markov chain approach is equivalent to the integral-equations approach if the integral is approximated via the product midpoint rule. This, in turn, is equivalent to choosing the basis functions, $\{\phi_j(x)\}_{1\le j\le N}$, as piecewise constants on $\Pi_{N+1}$, i.e., $\phi_j(x)=\indicator{x\in I_j^{N+1}}$, and equating the residual to zero at the midpoints of the intervals $I_j^{N+1}$, i.e., setting $z_j=(x_{j-1}+x_j)/2$, $1\le j\le N$. In this case the $(i,j)$-th element of the matrix $\boldsymbol{K}$ defined by~\eqref{eq:def-matrix-K} is
\begin{align*}
(\boldsymbol{K}_\infty)_{i,j}
&=
P_\infty^{\LR}\left(\dfrac{x_j}{1+z_i}\right)
-
P_\infty^{\LR}\left(\dfrac{x_{j-1}}{1+z_i}\right),\;1\le i,j\le N;
\end{align*}
cf.~\cite{Tartakovsky+etal:IWSM2009,Moustakides+etal:CommStat09}. It can be shown (see, e.g.,~\cite{Kryloff+Bogoliubov:CRA-URSS1929} or~\cite[pp.~130--135]{Kantorovich+Krylov:Book1958}) that this approach exhibits a superconvergence effect: the rate is also quadratic, even though the interpolation is based on polynomials of degree zero (i.e., constants, or step functions). However, in spite of the superconvergence and the much simpler matrix $\boldsymbol{K}$, the constant in front of $h^2$ in the corresponding error bound is large (larger than that for the ``hat'' functions). As a result, the partition size required by this method ends up being substantial. In fact, this method was employed, e.g., in~\cite{Tartakovsky+etal:IWSM2009,Moustakides+etal:CommStat09}, to compare the CUSUM chart and the original SR procedure, and the partition size used consisted of thousands of points to ensure reasonable accuracy. The comparison of this method and the proposed method for $\ell(x,A)$ performed in~\cite{Polunchenko+etal:MIPT2013,Polunchenko+etal:SA2014} confirmed that the new method is superior. In the next section we will offer the same comparison but for $\STADD(\mathcal{S}_A^r)$, and confirm that the new method is superior in this case as well.

\section{A case study}
\label{sec:case-study}

As an illustration of the proposed numerical method at work, consider a scenario where the observations, $\{X_n\}_{n\ge1}$, are independent Gaussian with mean zero pre-change and mean $\theta\neq0$ (known) post-change; the variance is 1 and does not change. Formally, the pre- and post-change distribution densities in this case are
\begin{align}\label{eq:gaussian-scenario}
f(x)
&=
\dfrac{1}{\sqrt{2\pi}}\exp\left\{-\dfrac{x^2}{2}
\right\}\;\text{and}\;
g(x)=\dfrac{1}{\sqrt{2\pi}}\exp\left\{-\dfrac{(x-\theta)^2}{2}\right\},
\end{align}
respectively, where $x\in\mathbb{R}$ and $\theta\neq0$. The corresponding ``instantaneous'' LR for the $n$-th data point, $X_n$, can be seen to be
\begin{align*}
\LR_n
&\triangleq
\dfrac{g(X_n)}{f(X_n)}
=\exp\left\{\theta X_n-\frac{\theta^2}{2}\right\},\;n\ge1,
\end{align*}
and, therefore, for each $n\ge1$ its distribution is log-normal with mean $-\theta^2/2$ and variance $\theta^2$ under measure $\Pr_\infty$, and with mean $\theta^2/2$ and variance $\theta^2$ under measure $\Pr_0$. Consequently, one can use~\eqref{eq:matrix-K-new-method} to find the matrix $\boldsymbol{K}$ required to implement the proposed method. Also, since in this case
\begin{align*}
{K}_\infty(x,y)
&=
\dfrac{1}{y\sqrt{2\pi\theta^2}}\exp\left\{-\dfrac{1}{2\theta^2}\left(\log\dfrac{y}{1+x}+\dfrac{\theta^2}{2}\right)^2\right\}\indicator{y/(1+x)\ge0},
\end{align*}
one can see that it is indifferent whether $\theta<0$ or $\theta>0$. We, therefore, without loss of generality, will consider only the former case, i.e., assume from now on that $\theta>0$.

\begin{remark}
It is not necessary to find the formula for ${K}_0(x,y)$ since for the proposed method it is sufficient to know $K_\infty(x,y)$ only. Yet, if it were necessary to have an explicit expression for ${K}_0(x,y)$, it would be easy to obtain it from the above formula for ${K}_\infty(x,y)$ and the identity $(1+x)\,{K}_0(x,y)=y\,{K}_\infty(x,y)$ established in Subsection~\eqref{ssec:integral-equations} using the change-of-measure identity~\eqref{eq:change-of-measure-identity}.
\end{remark}

We now employ the proposed numerical method and its predecessor offered and applied in~\cite{Tartakovsky+etal:IWSM2009,Moustakides+etal:CommStat09,Moustakides+etal:SS11} to evaluate the performance of the GSR procedure~\eqref{SR-rst}-\eqref{SR-rstat} for the Gaussian scenario~\eqref{eq:gaussian-scenario}. Our intent is to assess and compare the quality of each of the two methods. For the ARL to false alarm, this task was already accomplished in~\cite{Polunchenko+etal:MIPT2013,Polunchenko+etal:SA2014}, and, as expected from the discussion in the end of Subsection~\ref{ssec:numerical-method+accuracy}, the new method was confirmed to be rather accurate and robust, far surpassing its predecessor. We, therefore, shall devise the two methods to compute the STADD only. More specifically, as in~\cite{Polunchenko+etal:MIPT2013,Polunchenko+etal:SA2014}, we will examine the sensitivity of the STADD computed by each of the two methods using definition~\eqref{eq:STADD-gen-def} with respect to three factors:\begin{inparaenum}[\itshape a)]\item partition fineness (rough vs. fine), \item change magnitude (faint vs. contrast), and \item value of the ARL to false alarm (low vs. high)\end{inparaenum}.

As was mentioned in Subsection~\ref{ssec:numerical-method+accuracy}, the accuracy of the proposed method is determined by the accuracy of the underlying piecewise linear polynomial interpolation with basis~\eqref{eq:lin-basis}; see Theorems~\ref{thm:error-bound-ARL} and~\ref{thm:error-bound-lwrbnd-numerator}. Since the interpolation basis~\eqref{eq:lin-basis} is fixed, the corresponding interpolation error is dependent upon how the interval of interpolation (i.e., $[0,A]$) is partitioned. To that end, recall that if the interval is partitioned into non-overlapping subintervals joint at the Chebyshev abscissas (i.e., roots of the Chebyshev polynomials of the first kind), then the corresponding interpolation error is the smallest possible; see, e.g.,~\cite[Section~8.3]{Burden+Faires:Book2011}. Thus, to improve the overall accuracy of the method, we follow~\cite{Polunchenko+etal:MIPT2013,Polunchenko+etal:SA2014} and partition the interval $[0,A]$ into $N-1$, $N\ge2$, non-overlapping subintervals $I_j^N\triangleq(x_{j-1},x_j)$, $1\le j\le N-1$, joint at the shifted Chebyshev abscissas
\begin{align*}
x_{N-j}
&=
\frac{A}{2}\left\{1+\cos\left[(2j - 1)\dfrac{\pi}{2N}\right]/\cos\left(\dfrac{\pi}{2N}\right)\right\},\;1\le j\le N,
\end{align*}
where the shift is to make sure that $x_0=0$ and $x_{N-1}=A$; these points are also the collocation nodes $z_j$, i.e., $z_j=x_{j-1}$, $1\le j\le N$. Using $h\triangleq\max_{1\le j\le N-1}h_j$ with $h_j\triangleq x_{j}-x_{j-1}$ as a measure the partition fineness it can be shown that in this case
\begin{align*}
h_j
&=
A \tan \left(\frac{\pi }{2 N}\right) \sin \left(\frac{\pi  j}{N}\right),\; 1 \le j \le N-1,
\end{align*}
whence $h\triangleq\max_{1 \le j \le N-1} h_j=h_{\lfloor N / 2 \rfloor}$ with $\lfloor x\rfloor$ being the floor function; note that $h$ is roughly of order $A/N$ for sufficiently large $N$. For a reason explained shortly it is convenient to set the partition size, $N$, to be of the form $N=2^{j}$ for $j=1,2,\ldots$. By varying $j$, the partition can then be made more rough (small $j$) or more fine (large $j$). We will consider $j=1,2,\ldots,12$.

For the Gaussian scenario~\eqref{eq:gaussian-scenario} the magnitude of the change is represented by $\theta$. We will consider $\theta=0.01, 0.1, 0.5$ and $1.0$, which correspond to a very faint, small, moderate, and contrast change, respectively. For the ARL to false alarm ($\ARL(\mathcal{S}_A^r)=\gamma$) we will consider levels $\gamma=10^2, 10^3, 10^4$ and even $10^5$, although the latter is an extreme case and unlikely to be practical.

To measure the accuracy and rate of convergence of either of the two methods we will rely on the standard Richardson extrapolation technique: if $u_{2N}$, $u_N$ and $u_{N/2}$ are the solutions (of the corresponding integral equation) obtained assuming the partition size is $2N$, $N$ and $N/2$, respectively, then the rate of convergence, $c$, can be estimated as
\begin{align*}
2^{-c}
&\approx
\dfrac{\|u_{2N}-u_N\|_\infty}{\|u_{N}-u_{N/2}\|_\infty}\;\text{so that}\;
c\approx-\log_2\dfrac{\|u_{2N}-u_N\|_\infty}{\|u_{N}-u_{N/2}\|_\infty},
\end{align*}
and the actual error, $\|u-u_N\|_\infty$, can be estimated as $\|u-u_N\|_\infty\approx2^{-c}\|u_{N}-u_{N/2}\|_\infty$. This is why it is convenient to make the partition size, $N$, to be of the form $N=2^j$ for $j=1,2,\ldots$. As we mentioned before, we will consider $N=2^j$ for $j=1,2,\ldots,12$.

Both methods were implemented and tested in MATLAB. Tables~\ref{tab:STADD_vs_A_N__theta__0_01},~\ref{tab:STADD_vs_A_N__theta__0_10},~\ref{tab:STADD_vs_A_N__theta__0_50} and~\ref{tab:STADD_vs_A_N__theta__1_00} present the obtained results. Each table is for a specific change magnitude ($\theta=0.01,0.1,0.5$, and $1.0$), and reports the results in four two-column blocks, one for each of the selected values of the ARL to false alarm ($\gamma=10^2,10^3,10^4$, and $10^5$). Within each of the four two-column blocks, the left column reports the obtained values of the STADD for $r=0$, i.e., $\STADD(\mathcal{S}_A)$, and the right column reports the corresponding empirical estimate of the convergence rate (if it is available); the performance at an $r$ different from any of the collocation nodes $z_j$ can be computed using the iterated solution~\eqref{eq:int-eqn-iter-sol}. The numbers in brackets are the values of the STADD outputted by the predecessor method; in particular, \verb"[NaN]" indicates that the method failed. From the presented results one can conclude that:\begin{inparaenum}[\itshape a)]\item as expected, the convergence rate of the new method is, in fact, quadratic and \item it is achieved much quicker than for the predecessor method, for a broad range of values of the ARL to false alarm and change magnitudes\end{inparaenum}. Hence, the method is not only more accurate, but is also more robust. To boot, we reiterate that the new method is also more efficient as it can compute both the ARL to false alarm and the STADD {\em simultaneously}.
\begin{table}
    \centering
    \scalebox{0.65}{
    \begin{tabularx}{1.5\textwidth}{c *{9}{Y}}
    \toprule

     & \multicolumn{2}{c}{$\gamma=10^2$ ($A=99.2$)}
     & \multicolumn{2}{c}{$\gamma=10^3$ ($A=994.2$)}
     & \multicolumn{2}{c}{$\gamma=10^4$ ($A=9,941.9$)}
     & \multicolumn{2}{c}{$\gamma=10^5$ ($A=99,419.0$)}\\
    \cmidrule(lr){2-3} \cmidrule(lr){4-5} \cmidrule(lr){6-7} \cmidrule(lr){8-9}
          $N$ & $\STADD(\mathcal{S}_A)$ & Rate & $\STADD(\mathcal{S}_A)$ & Rate & $\STADD(\mathcal{S}_A)$ & Rate & $\STADD(\mathcal{S}_A)$ & Rate\\
    \midrule
        \multirow{2}{*}{2}&2.17402&-&2.84308&-&2.97606&-&2.99044&-\\
            &[NaN]&-&[NaN]&-&[NaN]&-&[NaN]&-\\
            \midrule
        \multirow{2}{*}{4}&34.11371&1.80193&169.48315&0.50123&244.24921&-0.76361&254.47215&-0.97507\\
            &[NaN]&[NaN]&[NaN]&[NaN]&[NaN]&[NaN]&[NaN]&[NaN]\\
            \midrule
        \multirow{2}{*}{8}&43.27374&1.22992&287.21487&0.49899&653.86714&-0.45861&748.82025&-0.92611\\
            &[NaN]&[NaN]&[NaN]&[NaN]&[NaN]&[NaN]&[NaN]&[NaN]\\
            \midrule
        \multirow{2}{*}{16}&47.17904&1.04689&370.52193&0.54205&1,216.76987&-0.31204&1,688.15411&-0.82836\\
            &[NaN]&[NaN]&[NaN]&[NaN]&[NaN]&[NaN]&[NaN]&[NaN]\\
            \midrule
        \multirow{2}{*}{32}&49.06923&1.05343&427.7369&0.76084&1,915.59076&-0.19302&3,356.09536&-0.71689\\
            &[NaN]&[NaN]&[NaN]&[NaN]&[NaN]&[NaN]&[NaN]&[NaN]\\
            \midrule
        \multirow{2}{*}{64}&49.97997&1.5395&461.50242&1.08494&2,714.44934&0.18883&6,097.57295&-0.57304\\
            &[53.76583]&-&[NaN]&[NaN]&[NaN]&[NaN]&[NaN]&[NaN]\\
            \midrule
        \multirow{2}{*}{128}&50.29327&2.44785&477.41991&1.50683&3,415.30374&0.88027&10,175.95755&-0.16839\\
            &[48.89458]&[1.52049]&[NaN]&[NaN]&[NaN]&[NaN]&[NaN]&[NaN]\\
            \midrule
        \multirow{2}{*}{256}&50.3507&1.92387&483.02102&1.86544&3,796.05373&1.63342&14,759.26034&0.59232\\
            &[50.59254]&[2.8191]&[NaN]&[NaN]&[NaN]&[NaN]&[NaN]&[NaN]\\
            \midrule
        \multirow{2}{*}{512}&50.36583&1.99945&484.55819&2.00587&3,918.77793&1.94218&17,799.2511&1.44963\\
            &[50.35194]&[3.67929]&[487.19086]&-&[NaN]&[NaN]&[NaN]&[NaN]\\
            \midrule
        \multirow{2}{*}{1024}&50.36962&2.00029&484.94092&1.99957&3,950.71356&1.99186&18,912.23803&1.88795\\
            &[50.37072]&[3.1533]&[484.92745]&[4.89991]&[NaN]&[NaN]&[NaN]&[NaN]\\
            \midrule
        \multirow{2}{*}{2048}&50.37056&1.99993&485.03663&2.0&3,958.74262&1.99861&19,212.95701&1.97715\\
            &[50.36861]&[-0.17441]&[485.00326]&[0.96341]&[NaN]&[NaN]&[NaN]&[NaN]\\
            \midrule
        \multirow{2}{*}{4096}&50.3708&-&485.06056&-&3,960.75182&-&19,289.33685&-\\
            &[50.371]&-&[485.04214]&-&[NaN]&-&[NaN]&-\\
    \bottomrule
    \end{tabularx}
    } 
    \caption{Results of accuracy and convergence analysis for $\STADD(\mathcal{S}_A)$ for $\theta=0.01$.}
    \label{tab:STADD_vs_A_N__theta__0_01}
\end{table}
\begin{table}
    \centering
    \scalebox{0.65}{
    \begin{tabularx}{1.5\textwidth}{c *{9}{Y}}
    \toprule
     & \multicolumn{2}{c}{$\gamma=10^2$ ($A=94.34$)}
     & \multicolumn{2}{c}{$\gamma=10^3$ ($A=943.41$)}
     & \multicolumn{2}{c}{$\gamma=10^4$ ($A=9,434.08$)}
     & \multicolumn{2}{c}{$\gamma=10^5$ ($A=94,340.5$)}\\
    \cmidrule(lr){2-3} \cmidrule(lr){4-5} \cmidrule(lr){6-7} \cmidrule(lr){8-9}
          $N$ & $\STADD(\mathcal{S}_A)$ & Rate & $\STADD(\mathcal{S}_A)$ & Rate & $\STADD(\mathcal{S}_A)$ & Rate & $\STADD(\mathcal{S}_A)$ & Rate\\
    \midrule
        \multirow{2}{*}{2}&2.76144&-&2.9058&-&2.92154&-&2.92313&-\\
            &[24.585]&-&[236.8525]&-&[2,359.52]&-&[23,586.12504]&-\\
            \midrule
        \multirow{2}{*}{4}&19.57682&0.67078&28.49977&-0.65615&29.7676&-0.88402&29.89945&-0.90832\\
            &[12.7925]&[1.00003]&[118.92625]&[1.0]&[1,180.26]&[1.0]&[11,793.56254]&[1.0]\\
            \midrule
        \multirow{2}{*}{8}&30.13966&0.72401&68.83245&-0.31003&79.31229&-0.82197&80.53033&-0.89334\\
            &[6.8964]&[1.04547]&[59.96313]&[1.0]&[590.63]&[1.0]&[5,897.2813]&[1.0]\\
            \midrule
        \multirow{2}{*}{16}&36.53453&1.27722&118.83417&0.21564&166.89827&-0.48788&174.57576&-0.68236\\
            &[4.03982]&[-3.28195]&[30.48157]&[1.0]&[295.81501]&[1.0]&[2,949.14068]&[1.0]\\
            \midrule
        \multirow{2}{*}{32}&39.17299&1.86583&161.89383&0.95142&289.72703&0.05887&325.4964&-0.36726\\
            &[31.82495]&[1.72421]&[15.74081]&[1.0002]&[148.40751]&[1.0]&[1,475.07037]&[1.0]\\
            \midrule
        \multirow{2}{*}{64}&39.89689&1.99622&184.16095&1.68328&407.64494&0.71091&520.16925&0.02383\\
            &[40.23452]&[6.50986]&[8.37144]&[1.18069]&[74.70377]&[1.0]&[738.03522]&[1.0]\\
            \midrule
        \multirow{2}{*}{128}&40.07834&1.99758&191.09434&1.94319&479.68504&1.42376&711.65248&0.50423\\
            &[40.14224]&[2.44728]&[5.12052]&[-5.52996]&[37.85191]&[1.00002]&[369.51765]&[1.0]\\
            \midrule
        \multirow{2}{*}{256}&40.12377&1.99939&192.8973&1.98713&506.5372&1.86034&846.65517&1.08119\\
            &[40.12532]&[0.40575]&[155.32721]&[1.97711]&[19.42619]&[1.00099]&[185.25886]&[1.0]\\
            \midrule
        \multirow{2}{*}{512}&40.13514&1.99985&193.35208&1.99680&513.93259&1.97013&910.46281&1.6616\\
            &[40.13809]&[4.42044]&[193.47941]&[10.10018]&[10.21963]&[1.48965]&[93.12949]&[1.0]\\
            \midrule
        \multirow{2}{*}{1024}&40.13798&1.99996&193.46603&1.99920&515.82011&1.99265&930.6316&1.92646\\
            &[40.13869]&[1.74523]&[193.51416]&[2.81961]&[6.94119]&[-6.92857]&[47.0649]&[1.00006]\\
            \midrule
        \multirow{2}{*}{2048}&40.13869&1.99999&193.49453&1.99980&516.2944&1.99817&935.93748&1.98293\\
            &[40.13887]&[2.0]&[193.50924]&[-0.24306]&[406.30949]&[1.86125]&[24.0336]&[1.00371]\\
            \midrule
        \multirow{2}{*}{4096}&40.13887&-&193.50165&-&516.41313&-&937.27974&-\\
            &[40.13891]&-&[193.50341]&-&[516.23069]&-&[12.54751]&-\\
    \bottomrule
    \end{tabularx}
    } 
    \caption{Results of accuracy and convergence analysis for $\STADD(\mathcal{S}_A)$ for $\theta=0.1$.}
    \label{tab:STADD_vs_A_N__theta__0_10}
\end{table}
\begin{table}
    \centering
    \scalebox{0.65}{
    \begin{tabularx}{1.5\textwidth}{c *{9}{Y}}
    \toprule
     & \multicolumn{2}{c}{$\gamma=10^2$ ($A=74.76$)}
     & \multicolumn{2}{c}{$\gamma=10^3$ ($A=747.62$)}
     & \multicolumn{2}{c}{$\gamma=10^4$ ($A=7,476.15$)}
     & \multicolumn{2}{c}{$\gamma=10^5$ ($A=74,761.5$)}\\
    \cmidrule(lr){2-3} \cmidrule(lr){4-5} \cmidrule(lr){6-7} \cmidrule(lr){8-9}
          $N$ & $\STADD(\mathcal{S}_A)$ & Rate & $\STADD(\mathcal{S}_A)$ & Rate & $\STADD(\mathcal{S}_A)$ & Rate & $\STADD(\mathcal{S}_A)$ & Rate\\
    \midrule
        \multirow{2}{*}{2}&2.6205&-&2.66516&-&2.66976&-&2.67022&-\\
            &[23.26223]&-&[218.41569]&-&[2,170.03728]&-&[21,686.28053]&-\\
            \midrule
        \multirow{2}{*}{4}&6.63261&0.10081&7.35814&-0.51049&7.43947&-0.58250&7.44771&-0.58983\\
            &[15.9271]&[1.13548]&[134.21152]&[0.73966]&[1,318.04091]&[0.71719]&[13,156.46809]&[0.7151]\\
            \midrule
        \multirow{2}{*}{8}&10.37395&1.26845&14.04345&-0.07021&14.58186&-0.32169&14.63816&-0.3489\\
            &[12.58828]&[2.28418]&[83.78332]&[0.74448]&[799.78634]&[0.69961]&[7,960.42708]&[0.69591]\\
            \midrule
        \multirow{2}{*}{16}&11.927&1.89783&21.06218&0.69112&23.50837&-0.05265&23.79588&-0.15479\\
            &[11.90281]&[0.82232]&[53.68363]&[0.82221]&[480.67633]&[0.71915]&[4,752.81859]&[0.71248]\\
            \midrule
        \multirow{2}{*}{32}&12.34375&1.96487&25.40939&1.60408&32.76669&0.336&33.99079&-0.02254\\
            &[12.29046]&[1.23906]&[36.65994]&[1.04091]&[286.83166]&[0.75063]&[2,795.30806]&[0.73797]\\
            \midrule
        \multirow{2}{*}{64}&12.45051&1.99174&26.83938&1.90115&40.10145&1.09236&44.34624&0.1498\\
            &[12.45469]&[2.67661]&[28.38608]&[1.87939]&[171.62094]&[0.79145]&[1,621.62076]&[0.76513]\\
            \midrule
        \multirow{2}{*}{128}&12.47735&1.99794&27.22223&1.97525&43.54141&1.77992&53.68039&0.58381\\
            &[12.48038]&[2.50367]&[26.13725]&[1.6805]&[105.05666]&[0.85487]&[931.02473]&[0.79149]\\
            \midrule
        \multirow{2}{*}{256}&12.48407&1.99949&27.3196&1.99376&44.54312&1.93619&59.90811&1.39801\\
            &[12.48491]&[2.11147]&[26.83883]&[0.59481]&[68.25222]&[1.001]&[532.03476]&[0.81722]\\
            \midrule
        \multirow{2}{*}{512}&12.48575&1.99987&27.34404&1.99844&44.80487&1.98475&62.27124&1.86278\\
            &[12.48596]&[1.96975]&[27.30336]&[3.49023]&[49.8628]&[1.50297]&[305.59341]&[0.84533]\\
            \midrule
        \multirow{2}{*}{1024}&12.48617&1.99997&27.35016&1.99961&44.871&1.9962&62.92097&1.96207\\
            &[12.48622]&[2.00126]&[27.3447]&[2.88607]&[43.37452]&[4.52839]&[179.55993]&[0.88647]\\
            \midrule
        \multirow{2}{*}{2048}&12.48628&1.99999&27.35169&1.9999&44.88758&1.99905&63.08773&1.99058\\
            &[12.48629]&[2.0]&[27.35029]&[1.9689]&[43.65568]&[-1.95213]&[111.38368]&[0.97978]\\
            \midrule
        \multirow{2}{*}{4096}&12.4863&-&27.35207&-&44.89173&-&63.12969&-\\
            &[12.48631]&-&[27.35172]&-&[44.7436]&-&[76.81434]&-\\
    \bottomrule
    \end{tabularx}
    } 
    \caption{Results of accuracy and convergence analysis for $\STADD(\mathcal{S}_A)$ for $\theta=0.5$.}
    \label{tab:STADD_vs_A_N__theta__0_50}
\end{table}
\begin{table}
    \centering
    \scalebox{0.65}{
    \begin{tabularx}{1.5\textwidth}{c *{9}{Y}}
    \toprule
     & \multicolumn{2}{c}{$\gamma=10^2$ ($A=56.0$)}
     & \multicolumn{2}{c}{$\gamma=10^3$ ($A=560.0$)}
     & \multicolumn{2}{c}{$\gamma=10^4$ ($A=5,603.5$)}
     & \multicolumn{2}{c}{$\gamma=10^5$ ($A=56,037.0$)}\\
    \cmidrule(lr){2-3} \cmidrule(lr){4-5} \cmidrule(lr){6-7} \cmidrule(lr){8-9}
          $N$ & $\STADD(\mathcal{S}_A)$ & Rate & $\STADD(\mathcal{S}_A)$ & Rate & $\STADD(\mathcal{S}_A)$ & Rate & $\STADD(\mathcal{S}_A)$ & Rate\\
    \midrule
        \multirow{2}{*}{2}&2.40791&-&2.44232&-&2.44582&-&2.44617&-\\
            &[17.26312]&-&[162.22888]&-&[1,612.83039]&-&[16,118.40774]&-\\
            \midrule
        \multirow{2}{*}{4}&3.65841666&-0.01705&3.91567&-0.57029&3.94348&-0.62773&3.94628&-0.63353\\
            &[11.56644]&[0.74976]&[101.10593]&[0.64531]&[996.98326]&[0.63622]&[9,955.4781]&[0.63533]\\
            \midrule
        \multirow{2}{*}{8}&4.92379&1.59602&6.10333&0.10926&6.25755&-0.12557&6.27349&-0.14948\\
            &[8.1786]&[0.93663]&[62.02662]&[0.71209]&[600.749]&[0.69524]&[5,987.81687]&[0.69364]\\
            \midrule
        \multirow{2}{*}{16}&5.34236&2.25233&8.13143&0.84824&8.78207&0.03852&8.85477&-0.05473\\
            &[6.40862]&[1.26388]&[38.17116]&[0.763]&[356.03184]&[0.73054]&[3,534.64538]&[0.72768]\\
            \midrule
        \multirow{2}{*}{32}&5.43021&2.03334&9.25796&1.93719&11.24008&0.38023&11.53585&0.01857\\
            &[5.67156]&[1.9065]&[24.11383]&[0.82876]&[208.54657]&[0.76159]&[2,053.24005]&[0.75641]\\
            \midrule
        \multirow{2}{*}{64}&5.45167&2.00629&9.55211893&2.11940&13.12859&1.25316&14.18264&0.15667\\
            &[5.47496]&[3.26795]&[16.19936]&[0.94472]&[121.5529]&[0.79243]&[1,176.29423]&[0.78272]\\
            \midrule
        \multirow{2}{*}{128}&5.45701&2.00172&9.61982&2.00558&13.92088&2.12106&16.55706&0.61782\\
            &[5.45455]&[2.84155]&[12.08755]&[1.18738]&[71.32525]&[0.82551]&[666.55073]&[0.80617]\\
            \midrule
        \multirow{2}{*}{256}&5.45835&2.00043&9.63668&2.00208&14.10301&2.04849&18.10437&1.63306\\
            &[5.4574]&[1.30897]&[10.28205]&[1.69457]&[42.98257]&[0.86937]&[375.03033]&[0.82732]\\
            \midrule
        \multirow{2}{*}{512}&5.45868&2.00011&9.64089&2.00053&14.14704&2.00316&18.60322&2.15815\\
            &[5.45855]&[2.53479]&[9.72425]&[2.69478]&[27.46815]&[0.94795]&[210.73708]&[0.8473]\\
            \midrule
        \multirow{2}{*}{1024}&5.45876&2.00003&9.64194&2.00013&14.15802&2.00117&18.71499&2.01407\\
            &[5.45875]&[2.49798]&[9.6381]&[6.18789]&[19.42597]&[1.12198]&[119.41889]&[0.86847]\\
            \midrule
        \multirow{2}{*}{2048}&5.45879&2.00001&9.6422&2.00003&14.16077&2.00029&18.74266&2.00253\\
            &[5.45878]&[2.12379]&[9.63928]&[-1.0763]&[15.73088]&[1.51105]&[69.40136]&[0.8969]\\
            \midrule
        \multirow{2}{*}{4096}&5.45879&-&9.64227&-&14.16145&-&18.74956&-\\
            &[5.45879]&-&[9.64177]&-&[14.43443]&-&[42.54005]&-\\
    \bottomrule
    \end{tabularx}
    } 
    \caption{Results of accuracy and convergence analysis for $\STADD(\mathcal{S}_A)$ for $\theta=1.0$.}
    \label{tab:STADD_vs_A_N__theta__1_00}
\end{table}

\section{Conclusion}
\label{sec:conclusion}

We proposed a numerical method to evaluate the performance of the emerging Generalized Shiryaev--Roberts (GSR) procedure in the quickest change-point detection problem's multi-cyclic context. The GSR procedure is an ``umbrella'' term for the original Shiryaev--Roberts procedure and its recent extension---the Shiryaev--Roberts--$r$ procedure. The proposed method is based on the integral-equations approach and uses the collocation framework. To improve the accuracy, robustness and efficiency of the method, the collocation basis functions are selected so as to exploit a certain change-of-measure identity and a certain martingale property of the GSR procedure's detection statistic; efficiency is improved since, by design, the method can compute both the Average Run Length (ARL) to false alarm and the Stationary Average Detection Delay (STADD) {\em simultaneously}. We proved that the method's rate of convergence is quadratic and obtained a tight upperbound on its error. As tested in a case study, the method's expected rate of convergence, greater accuracy and robustness were confirmed experimentally. The method can be used to design the GSR procedure as needed by appropriate selection of the headstart and detection threshold. It is our hope that the proposed method will stimulate further research on as well as application of the GSR procedure in practice.

\section*{Acknowledgements}

The authors would like to thank the two anonymous reviewers and Mr.~Yilin Zhu of the Department of Mathematical Sciences at the State University of New York (SUNY) at Binghamton for providing constructive feedback that helped improve the quality of the paper. The authors would also like to thank the Editor-in-Chief, Dr.~Fabrizio Ruggeri of the Institute of Applied Mathematics and Information Technology, Italian National Research Council (CNR IMATI), for the time and effort spent handling the review process to produce this special issue.

An earlier version of this work was presented by A.S.\ Polunchenko at the 56-th Moscow Institute of Physics and Technology Annual Scientific Conference held from November 25, 2013 to November 30, 2013 at said Institute in Moscow, Russia. A.S.\ Polunchenko would like to acknowledge the support provided by the AMS-Simons travel grant to fund his trip to the above Conference.

\bibliographystyle{asmbi}

\bibliography{main,integral-equations,spc,operator-theory,size-bias,numerical-analysis}

\end{document}